\pdfoutput=1
\documentclass[nonacm, manuscript, screen, sigplan]{acmart}

\usepackage{stmaryrd}
\usepackage{xspace}
\usepackage{tikz}
\usetikzlibrary{shapes.geometric}
\usetikzlibrary{positioning}

\AtBeginDocument{%
  }





\citestyle{acmauthoryear}

\usepackage{amsmath}
\usepackage{amsthm}
\usepackage{mathtools}
\usepackage{mathpartir}
\usepackage{stmaryrd}
\usepackage{graphicx}
\usepackage[capitalize]{cleveref}

\setlength\arraycolsep{0.18em}

\newcommand{\ttt}[1]{\textnormal{\texttt{#1}}}

\def\grad{\widetilde}


\newcommand{\htrip}[3]{\vdash \{#1\} ~ #2 ~ \{#3\}}
\newcommand{\itrip}[3]{\vdash [#1] ~#2~ [#3]}

\newcommand{\etrip}[3]{\vdash (#1) ~ #2 ~ (#3)}
\newcommand{\getrip}[3]{\gvdash (#1) ~ #2 ~ (#3)}


\newcommand{\Expr}{\mathrm{Expr}}

\newcommand{\Formula}{\textsc{Formula}}
\newcommand{\SatFormula}{\textsc{SatFormula}}
\newcommand{\GFormula}{\ensuremath{\grad{\textsc{F}}\textsc{ormula}}}

\DeclareMathOperator{\fv}{fv}

\renewcommand{\wp}{\operatorname{wp}}
\renewcommand{\sp}{\operatorname{sp}}
\renewcommand{\mod}{\operatorname{mod}}


\newcommand{\iffdef}{\mathrel{\stackrel{\text{def}}{\iff}}}
\newcommand{\gvdash}{\mathrel{\grad{\vdash}}}

\def\kskip{\ttt{skip}}
\def\kif{\ttt{if}}
\def\kthen{\ttt{then}}
\def\kelse{\ttt{else}}
\newcommand{\ifstmt}[3]{\kif\ #1\ \kthen\ #2\ \kelse\ #3}

\makeatletter
\providecommand*{\Dashv}{%
  \mathrel{%
    \mathpalette\@Dashv\vDash
  }%
}
\newcommand*{\@Dashv}[2]{%
  \reflectbox{$\m@th#1#2$}%
}
\makeatother

\newtheorem{theorem}{Theorem}
\newtheorem{lemma}{Lemma}

\theoremstyle{definition}
\newtheorem{definition}{Definition}

\newcommand{\GEL}{\texorpdfstring{\ensuremath{\widetilde{\text{EL}}}}{\~EL}\xspace}
\newcommand{\GIL}{\ensuremath{\widetilde{\text{IL}}}\xspace}


\begin{document}

\title[Gradual Exact Logic]{
    \texorpdfstring{Gradual Exact Logic \\ \LARGE Unifying Hoare Logic and Incorrectness Logic via Gradual Verification}
    {Gradual Exact Logic: Unifying Hoare Logic and Incorrectness Logic via Gradual Verification}}

\author{Conrad Zimmerman}
\email{zimmerman.co@northeastern.edu}
\affiliation{%
  \institution{Northeastern University}
  \city{Boston}
  \state{MA}
  \country{USA}
}

\author{Jenna DiVincenzo}
\email{jennad@purdue.edu}
\affiliation{%
  \institution{Purdue University}
  \city{West Lafayette}
  \state{IN}
  \country{USA}
}


\begin{abstract}
  Previously, gradual verification has been developed using overapproximating logics such as Hoare logic. We show that the static verification component of gradual verification is also connected to underapproximating logics like incorrectness logic.
  To do this, we use a novel definition of gradual verification and a novel gradualization of exact logic \cite{el} which we call \emph{gradual exact logic}. Further, we show that Hoare logic, incorrectness logic, and gradual verification can be defined in terms of gradual exact logic.

  We hope that this connection can be used to develop tools and techniques that apply to both gradual verification and bug-finding.
  For example, we envision that techniques defined in terms of exact logic can be directly applied to verification, bug-finding, and gradual verification, using the principles of gradual typing \cite{agt}.
\end{abstract}





\maketitle

\section{Overview}\label{sec:intro}

Incorrectness logic \cite{il} has been recently developed as a formal basis for ``true bug-finding'' and has been applied in industrial-strength tools \cite{pulsex}. Deductions in this logic prove reachability, which enables bug-finding tools to prove the existence of an invalid state while selectively exploring the possible paths.

At the same time, gradual verification (GV) \cite{vmcai18} addresses the complexity of traditional static verification. Gradually verified programs may contain \emph{imprecise specifications}---logical formulas annotated to indicate that they contain only a partial specification of behavior. A gradual verifier checks the imprecise specifications using static verification where it can and run-time checks (i.e. dynamic verification) elsewhere. These run-time checks can be exercised, e.g. with testing, giving the programmer confidence that their code will not enter a state that violates their partial specifications. Using gradual verification, programmers can incrementally verify a program, incrementally learn verification constructs, and safely guard unverified components.

More recent work \cite{ol,el,gillian} has produced logics and tools that unify over-approximating (OX) logic, often used in verification, and under-approximating (UX) logics including IL. In this paper, we propose another unification of OX and UX logics that we derive using the principles of gradual verification. The resulting characterization of IL demonstrates a previously unexplored connection between GV and IL.

\subsection{Hoare logic}\label{sec:overview-hl}

A triple in Hoare logic (HL) \cite{hoare} is denoted $\{ P \} ~C~ \{ Q \}$. ($P, Q$ refer to logical specifications; $C$ refers to a program statement.) Semantically, the triple is valid if, for every state $\sigma \in P$ (i.e., $P$ is true of $\sigma$), when $\sigma \stackrel{C}{\to} \sigma'$ (i.e., executing $C$ results in $\sigma'$), then $\sigma' \in Q$:
\begin{center}
  \begin{tikzpicture}
    \node[ellipse,
      draw,
      anchor=north,
      minimum width=1cm,
      fill=gray!10,
      minimum height=1.5cm] (Q) at (3, 0) {};
    \fill[fill=gray!30] (0, 0) -- (3, -0.6) -- (3, -1.4) -- (0, -1.5);
    \node[ellipse, draw, fill=gray!50, minimum height=0.8cm, minimum width=0.4cm] at (3, -1) {};
    \node[ellipse,
      draw,
      anchor=north,
      minimum width=1cm,
      fill=gray!10,
      minimum height=1.5cm] (P) at (0, 0) {};
    \node at (0, -0.4) {$P$};
    \node at (3, -0.4) {$Q$};
    \node at (1.5, -1) {$\overset{C}{\longrightarrow}$};
  \end{tikzpicture}
  \end{center}

Thus HL is an \emph{overapproximating} (OX) logic---the postcondition $Q$ overapproximates the states that are reachable from $P$; precisely, $Q \supseteq \{ \sigma' \mid \exists \sigma \in P : \sigma \overset{C}{\to} \sigma' \}$.
HL is a formal foundation for program verification precisely because the postcondition is true in \emph{all} ending states.

\subsection{Incorrectness logic}\label{sec:overview-il}

A triple in incorrectness logic (IL) \cite{il} is denoted $[P] ~C~ [Q]$\footnote{IL triples often include a specification for error states, but we omit error handling to simplify the comparison with other logics.}. Semantically, the triple is valid if, for every $\sigma' \in Q$, there is some $\sigma \in P$ such that $\sigma \overset{C}{\to} \sigma'$:
\begin{center}
  \begin{tikzpicture}
    \node[ellipse,
      draw,
      anchor=north,
      minimum width=1cm,
      fill=gray!10,
      minimum height=1.5cm] (P) at (0, 0) {};
    \fill[fill=gray!30] (3, 0) -- (0, -0.6) -- (0, -1.4) -- (3, -1.5);
    \node[ellipse, draw, fill=gray!50, minimum height=0.8cm, minimum width=0.4cm] at (0, -1) {};
    \node[ellipse,
      draw,
      anchor=north,
      minimum width=1cm,
      fill=gray!10,
      minimum height=1.5cm] (Q) at (3, 0) {};
    \node at (0, -0.4) {$P$};
    \node at (3, -0.4) {$Q$};
    \node at (1.5, -1) {$\stackrel{C}{\longrightarrow}$};
  \end{tikzpicture}
  \end{center}
Thus IL is an \emph{underapproximating} (UX) logic---the postcondition $Q$ underapproximates the states that are reachable from $P$; precisely, $Q \subseteq \{ \sigma' \mid \exists \sigma \in P : \sigma \overset{C}{\to} \sigma' \}$. Interpreting $Q$ as a specification of a bug, IL is a logic for finding bugs since a valid triple indicates that the bug is reachable.

\subsection{Gradual verification}\label{sec:overview-gv}

Gradual verification (GV) \citep{vmcai18} reduces the burden of static verification by allowing incomplete (\emph{imprecise}) specifications. A gradual verifier may make optimistic assumptions when verifying imprecisely-specified code. Typically, the final program is elaborated to check these assumptions at run-time. However, in this work we focus solely on the static verification component of GV so that we can compare its logic with HL and IL.


We denote imprecise triples by $\{ \ttt{?} \wedge P \} ~C~ \{ Q \}$. Intuitively, this triple is valid if there is some $P' \Rightarrow P$ such that $\{ P' \} ~C~ \{ Q \}$ is valid in HL. One can think of \ttt{?} as representing the additional assumptions introduced by $P'$.

For example, the following imprecise triple is valid:
$$\{ \textcolor{purple}{\ttt{?}} \wedge \top \} ~x \coloneq x + 1~ \{ x > 0 \}$$
This follows from the validity of
$$\{ \textcolor{purple}{x \ge 0} \wedge \top \} ~x \coloneq x + 1~ \{ x > 0 \}.$$
That is, the postcondition is ensured when assuming $x \ge 0$.

However, the assumptions must be plausible; formally, $P'$ (and thus $P$) must be satisfiable (i.e., $P' \not\equiv \bot$). Otherwise, all imprecise triples would be vacuously valid by taking $P' \equiv \bot$.
With this in mind, GV can be stated as a \emph{reachability} problem---semantically, $\{ \ttt{?} \wedge P \} ~C~ \{ Q \}$ is valid if there exists some states $\sigma$ and $\sigma'$ such that $\sigma \in P$, $\sigma \overset{C}{\to} \sigma'$, and $\sigma' \in Q$:
\begin{center}
  \begin{tikzpicture}
    \node[ellipse,
      draw,
      anchor=north,
      minimum width=1cm,
      fill=gray!10,
      minimum height=1.5cm] (Q) at (3, 0) {};
    \node[ellipse,
      draw,
      anchor=north,
      minimum width=1cm,
      fill=gray!10,
      minimum height=1.5cm] (P) at (0, 0) {};
    \filldraw (0, -1) circle (1pt);
    \filldraw (3, -1) circle (1pt);
    \node at (0, -0.4) {$P$};
    \node at (3, -0.4) {$Q$};
    \draw[color=gray] (0, -1) -- node[color=black, above] {$\stackrel{C}{\longrightarrow}$} (3, -1);
  \end{tikzpicture}
\end{center}
By contrapositive, the triple is invalid (i.e., static verification will error) if it is never possible for $C$ to ensure $Q$, given $P$.

By comparing this diagram with those in \S\ref{sec:overview-hl} and \S\ref{sec:overview-il}, we can deduce that HL \emph{and} IL triples are valid imprecise triples, except for vacuous cases where $P \equiv \bot$ in HL or $Q \equiv \bot$ in IL.

Our semantic definition of validity is equivalent to the previous definition which uses Hoare triples: Let $P'$ be a formula that represents $\sigma$ as specifically as possible, then (intuitively) $\{ P' \} ~C~ \{ Q \}$ is valid since $\sigma \overset{C}{\to} \sigma'$ and $\sigma' \in Q$.

But, we can also define valid imprecise triples in terms of IL---$\{ \ttt{?} \wedge P \} ~C~ \{ Q \}$ is valid if $[P] ~C~ [Q']$ is valid for some $Q' \Rightarrow Q$. In \S\ref{sec:gv-hl-gel-il} we will prove these definitions equivalent.

\section{Formal foundations}

We will now sketch our formal definitions and results. See the appendices for the full statements and proofs.
\begin{enumerate}
  \item We define \emph{gradual exact logic}---a consistent lifting \citep{agt} of exact logic \cite{el}.
  \item We show that HL, IL, and GV can be characterized by gradual exact logic.
  \item We show that GV contains OX and UX deductions.
  \item We show that, for imprecise specifications, GV can be equivalently defined using OX or UX logics.
\end{enumerate}

\subsection{Exact logic}

Exact logic (EL) \cite{el} is the intersection of HL and IL: an EL triple $(P)~C~(Q)$ is valid if $\{ P \}~C~\{Q\}$ and $[P]~C~[Q]$ are both valid. Deductions are thus \emph{exact}---they can neither under- or overapproximate behavior (see \cref{app:el} for rules).

\subsection{Gradual exact logic}

We further define \emph{gradual exact logic} (\GEL) as a consistent lifting \citep{agt} of EL.

First, some definitions: $\Formula$ denotes all formulas in FOL with arithmetic; we call these \emph{precise}. $\SatFormula$ denotes all satisfiable formulas.
An \emph{imprecise formula} is of the form $\ttt{?} \wedge P$ where $P \in \Formula$.
A \emph{gradual formula} $\grad{P} \in \GFormula$ can be either precise or imprecise.

The \emph{concretization} $\gamma : \GFormula \to \mathcal{P}(\Formula)$ interprets a gradual formulas as sets of precise formulas:
$$
  \gamma(\ttt{?} \wedge P) = \{ P' \in \SatFormula \mid P' \Rightarrow P \}, \quad \gamma(P) = \{ P \}$$
Let $\vdash (P) ~C~ (Q)$ denote a valid EL triple. Deductions in \GEL, denoted $\gvdash (\grad{P}) ~C~ (\grad{Q})$, are defined as a consistent lifting of EL deductions (\cref{app:el-gradual}):
$$\gvdash (\grad{P})~C~(\grad{Q}) \iffdef \vdash (P)~C~(Q) ~\text{for some $P \in \gamma(\grad{P})$, $Q \in \gamma(\grad{Q})$}$$

\subsection{Strongest postconditions}

$\sp(P, C)$ denotes the strongest (WRT $\Rightarrow$) $Q$ for which $\{ P \} ~C~ \{ Q \}$ is valid (calculated as usual; see \Cref{app:sp}). Strongest postconditions are related to HL, IL, and EL as follows:
\begin{align*}
  \vdash \{ P \} ~C~ \{ Q \} &\iff \sp(P, C) \Rightarrow Q &\text{\Cref{thm:hoare-iff-sp}}\\
  \vdash [P] ~C~ [Q] &\iff \sp(P, C) \Leftarrow Q &\text{\Cref{thm:il-iff-sp}}\\
  \vdash (P) ~C~ (Q) &\iff \sp(P, C) \equiv Q &\text{\Cref{thm:el-iff-sp}}
\end{align*}



\subsection{HL and IL via \GEL}\label{sec:charact-hl-il}

We can characterize valid HL triples as \GEL triples where the postcondition is made imprecise. Assuming that $P \not\equiv \bot$ and $C$ terminates (in either of these cases $\{P\}~C~\{Q\}$ is vacuous), we have $P, Q, \sp(P, C) \in \SatFormula$ and thus (\cref{thm:hl-as-gel})
\begin{align*}
  \vdash \{P\}~C~\{Q\} &\iff \sp(P, C) \Rightarrow Q \\
    &\iff \sp(P, C) \in \gamma(\ttt{?} \wedge Q) \\
    &\iff \getrip{P}{C}{\ttt{?} \wedge Q}.
\end{align*}

Likewise, we can characterize valid IL triples as \GEL triples where the precondition is made imprecise. Weakest preconditions are not always defined for IL \citep{il}, however, we can reuse weakest preconditions for HL to witness the necessary formula (see \cref{app:fixpoint}). Assuming $Q \not\equiv \bot$ (otherwise $[P]~C~[Q]$ is vacuous), we have (\cref{thm:il-as-gel})
\begin{align*}
  \itrip{P}{C}{Q} &\iff Q \Rightarrow \sp(P, C) \\
    &\iff Q \equiv \sp(\wp(Q, C) \wedge P, C) \\
    &\iff \etrip{P \wedge \wp(Q, C)}{C}{Q} \\
    &\iff \getrip{\ttt{?} \wedge P}{C}{Q}.
\end{align*}

\subsection{GV via HL, \GEL, and IL}\label{sec:gv-hl-gel-il}

We can now give a precise definition of GV in terms of HL, as sketched in \S\ref{sec:overview-gv}\footnote{Here we define GV more generally for all \emph{gradual} preconditions, whereas in \S\ref{sec:overview-gv} we defined it for \emph{imprecise} preconditions.}. We denote valid GV triples as $\gvdash \{ \grad{P} \} ~C~ \{ Q \}$.
$$\gvdash \{ \grad{P} \} ~C~ \{ Q \} \iffdef \vdash \{ P \}~C~\{Q\} ~\text{for some $P \in \gamma(\grad{P})$}$$
Using this definition and applying the characterization of HL from \S\ref{sec:charact-hl-il} to characterize GV in terms of EL:
$$\gvdash \{ \grad{P} \} ~C~ \{ Q \} \iff \gvdash (\grad{P})~C~(\ttt{?} \wedge Q)$$

For sake of comparison, we can define \GIL as a lifting of IL, the same way we have lifted HL to GV.
$$\gvdash [P]~C~[\grad{Q}] \iffdef \vdash [P]~C~[Q] ~\text{for some $Q \in \gamma(\grad{Q})$}$$
But, in the case of imprecision this is equivalent to GV. We can see this using the characterization of IL given in \S\ref{sec:charact-hl-il}:
\begin{align*}
  \gvdash [P]~C~[\ttt{?} \wedge Q] &\iff \getrip{\ttt{?} \wedge P}{C}{\ttt{?} \wedge Q} \\
    &\iff \gvdash \{ \ttt{?} \wedge P \} ~C~ \{ Q \}.
\end{align*}

Note: GV and \GIL differ on precise formulas---$\gvdash [P] ~C~ [Q]$ is \emph{not} equivalent to $\gvdash \{ P \} ~C~ \{ Q \}$. Also, we do not gradualize postconditions in HL or preconditions in IL because we can arbitrarily weaken these specifications already.

While verification of precise formulas is a key aspect of GV, this demonstrates that verification of imprecise formulas can be accomplished using IL, and moreover that GV, when verifying imprecisely-specified code, is proving an IL deduction.
Finally, we can make precise our claim from \S\ref{sec:overview-gv} that both OX and UX deductions are valid in (the imprecise fragment of) GV. Assuming $P \not\equiv \bot$, we have (immediate from definition of GV)
$$\htrip{P}{C}{Q} \implies \gvdash \{ \ttt{?} \wedge P \} ~C~ \{ Q \}.$$
Also, assuming $Q \not\equiv \bot$, we have
\begin{align*}
  \itrip{P}{C}{Q} &\implies \getrip{\ttt{?} \wedge P}{C}{Q} \\
    &\implies \getrip{\ttt{?} \wedge P}{C}{\ttt{?} \wedge Q} \\
    &\implies \gvdash \{ \ttt{?} \wedge P \} ~C~ \{ Q \}.
\end{align*}
Thus GV (and \GEL) represents the union of $\text{HL}$ and $\text{IL}$, while EL represents the intersection.

\section{Applications}\label{sec:applications}

GV and IL verifiers in practice operate quite similarly; for example, compare the core \emph{consume} operation of \citet{popl2024} for a gradual verifier and of \citet{gillian} for an IL verifier. Both types of verifiers make assumptions, including pruning paths, when establishing a postcondition. This work formalizes the connection between these methods of verification; in particular, gradual verification of imprecisely-specified code is equivalent to IL deductions. We hope that this will allow verifiers that already incorporate both OX and UX logics, for example \citet{gillian}, to be easily extended to GV.

Similarly, we hope our approach can be used as a framework for unifying techniques across static verification, GV, and bug-finding. For example, bi-abduction has been developed in OX logics \cite{biabduction}, applied to UX verification \cite{gillian}, and is related to GV \cite{jenna}. We expect that techniques like this could be developed in the context of an exact logic, and then their applications to OX, UX, and GV logics could be derived using AGT-style techniques \cite{agt}.

\section{Caveats and future work}\label{sec:caveats}

While we have proven the results described, we have done so only for a very restrictive language, and thus our results should be considered preliminary. In particular, we do not consider heaps, method calls, or loops. We expect our results extend to these constructs, but showing this will require significant work.

Our definition of GV differs from previous definitions \citep{vmcai18,oopsla20,popl2024}. Our novel definition more clearly demonstrates the connection with IL, and we believe it captures the essence of GV. However, this definition only considers static verification, and thus does not consider run-time assertions. We expect this could be added, and we have hopes that we may be able to model run-time assertions using an evidence-based calculus similar to that of gradual typing \citet{agt}.
In addition, further work is necessary to elucidate how the ``gradual guarantees'' \cite{agt} affect the relation of GV to IL. In particular, the static gradual guarantee seems to prohibit arbitrarily dropping paths, which IL can do.

Finally, our work is (to our knowledge) the first to explore a gradualization of exact logic. It remains to be seen whether this is useful its own right. For example, if library developers write exact specifications, a gradual exact logic could be used to aid development of these specifications, similar to how GV aids OX verification. But significantly more work would be necessary for this.

\section{Conclusion}

We have demonstrated the similarities and differences of GV and IL. Specifically, the relation between IL deductions and \GEL deductions with imprecise preconditions shows that the notion of assumptions used in GV is equivalent to the consequence rule (and path pruning) in IL. We also have shown that GV with imprecise specifications is equivalent to IL. Furthermore, we have defined gradual exact logic and used this to formally compare IL, HL, and GV. While much work remains before it is widely applicable, we hope that this framework can be used to develop techniques that uniformly target all three methods of verification.

\begin{acks}
  We thank Devin Singh for his contributions to this work.
\end{acks}

\clearpage

\bibliographystyle{ACM-Reference-Format}
\bibliography{references}

\clearpage

\appendix

\setlength{\abovedisplayskip}{2pt}
\setlength{\belowdisplayskip}{2pt}
\setlength{\abovedisplayshortskip}{2pt}
\setlength{\belowdisplayshortskip}{2pt}
\def \MathparLineskip {\lineskip=2pt}

\section{Grammar}\label{app:grammar}

We define a basic language that includes mutable variables and conditionals. Note that we do not include \ttt{while} loops or functions; these constructs are left for future work.

\begin{align*}
  \Expr \ni E ~{\Coloneqq}~ &n \mid \bot \mid \top \mid x \mid E \vee E \mid E \wedge E \mid \\
    &E = E \mid E < E \mid \neg E \\
  \mathrm{Cmd} \ni C ~{\Coloneqq}~ &\kskip \mid x \coloneq E \mid \ifstmt{E}{C}{C} \mid C; C \\
  \mathrm{Asrt} \ni P, Q, R ~{\Coloneqq}~ &E \mid \neg P \mid P \wedge P \mid P \vee P \mid \exists x\, P
\end{align*}
where $n \in \mathbb{Z}$ and $x$ a variable name.

We assume that programs are typed correctly; for example, expressions $E$ in $\ifstmt{E}{C_1}{C_2}$ will always evaluate to a boolean value.

\begin{definition}[Implication]
  $P \Rightarrow Q$ if all states that satisfy $P$ also satisfy $Q$.

  Note: we do not formalize states or the semantics of assertions; we assume that our logic is close enough to first-order logic and thus use FOL deductions to reason about assertions.

  Note: We use ${\cdot} \Rightarrow {\cdot}$ to denote implication between formulas in first-order logic with booleans and arithmetic. ${\cdot} \implies {\cdot}$ denotes ``if-then'' in our metatheory.
\end{definition}

\begin{definition}[Equivalence of propositions]
  $P \equiv Q$ denotes that $P$ and $Q$ are logically equivalent; that is,
  $$P \equiv Q \iffdef P \Rightarrow Q ~\text{and}~ Q \Rightarrow P.$$
  Throughout this paper, we assume that identity of propositions coincides with equivalence; that is, when we write an individual proposition, technically we are denoting the equivalence class that contains that proposition. For example, $\{ \top, \bot \} = \{ 1 = 1, 1 = 2 \}$.
\end{definition}

\begin{definition}[Replacement]
  $P[x/E]$ denotes $P$ with all free occurrences of $x$ replaced by $E$.
\end{definition}

\begin{definition}
  $\fv(P)$ denotes the free variables in the assertion $P$.
\end{definition}

\begin{definition}
  $\fv(C)$ denotes all variables referenced or assigned in $C$.
\end{definition}

\begin{definition}
  $\mod(C)$ denotes all variables assigned in $C$. Explicitly,
  \begin{align*}
    \mod(\kskip) &\coloneq \emptyset \\
    \mod(x \coloneq E) &\coloneq x \\
    \mod(C_1; C_2) &\coloneq \mod(C_1) \cup \mod(C_2) \\
    \mod(\ifstmt{E}{C_1}{C_2}) &\coloneq \mod(C_1) \cup \mod(C_2)
  \end{align*}
\end{definition}

\section{Predicate transformers}\label{app:transformers}

\subsection{Weakest preconditions}\label{app:wp}

\begin{definition}
  $P$ is the \emph{weakest precondition} for a statement $C$ and postcondition $Q$, denoted $\wp(C, Q)$, if it is the weakest predicate (WRT $\Rightarrow$) that ensures that if a state satisfies $P$, then after executing $C$, $Q$ holds.
\end{definition}

The following explicit calculations correspond to the previous definition (proved in \cref{thm:wp-iff-hoare}):
\begin{align*}
  &\wp(\kskip, Q) = Q \\
  &\wp(x \coloneq E, Q) = Q[x/E] \\
  &\wp(C_1; C_2, Q) = \wp(C_1, \wp(C_2, Q)) \\
  &\wp(\ifstmt{E}{C_1}{C_2}, Q) = \\
  &\quad(\wp(C_1, Q) \wedge E) \vee (\wp(C_2, Q) \wedge \neg E)
\end{align*}

\begin{lemma}[Stronger postcondition]\label{lem:wp-strengthen-post}
  If $Q \Rightarrow Q'$ then \\ $\wp(C, Q) \Rightarrow \wp(C, Q')$.
\end{lemma}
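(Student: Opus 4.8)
The plan is to proceed by structural induction on the command $C$, using the explicit calculations of $\wp$ given above (justified by \Cref{thm:wp-iff-hoare}).

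For the base cases: when $C = \kskip$, both sides reduce to $Q$ and $Q'$ respectively, so the claim is exactly the hypothesis $Q \Rightarrow Q'$. When $C = x \coloneq E$, we must derive $Q[x/E] \Rightarrow Q'[x/E]$ from $Q \Rightarrow Q'$; this is the fact that semantic implication is preserved under substitution of an expression for a variable in first-order logic with arithmetic. Concretely, a state $\sigma$ satisfying $Q[x/E]$ corresponds to the state $\sigma$ with $x$ updated to the value of $E$ in $\sigma$ satisfying $Q$; that updated state then satisfies $Q'$, so $\sigma$ satisfies $Q'[x/E]$.

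For the inductive cases: when $C = C_1; C_2$, the induction hypothesis for $C_2$ gives $\wp(C_2, Q) \Rightarrow \wp(C_2, Q')$, and then the induction hypothesis for $C_1$ applied with these two formulas as postconditions yields $\wp(C_1, \wp(C_2, Q)) \Rightarrow \wp(C_1, \wp(C_2, Q'))$, which is exactly $\wp(C_1; C_2, Q) \Rightarrow \wp(C_1; C_2, Q')$. When $C = \ifstmt{E}{C_1}{C_2}$, the induction hypotheses give $\wp(C_i, Q) \Rightarrow \wp(C_i, Q')$ for $i \in \{1, 2\}$; since $\wedge$ and $\vee$ are monotone with respect to $\Rightarrow$, we get $(\wp(C_1, Q) \wedge E) \vee (\wp(C_2, Q) \wedge \neg E) \Rightarrow (\wp(C_1, Q') \wedge E) \vee (\wp(C_2, Q') \wedge \neg E)$.

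The only step that needs any thought is the assignment case, where one invokes the substitution lemma for FOL implication; all other cases are direct appeals to the induction hypothesis together with monotonicity of the propositional connectives. Alternatively, the induction can be avoided by a purely semantic argument: since $\wp(C, Q)$ ensures that $Q$ holds after executing $C$, and $Q \Rightarrow Q'$, it also ensures that $Q'$ holds after executing $C$; as $\wp(C, Q')$ is by definition the weakest such predicate, $\wp(C, Q) \Rightarrow \wp(C, Q')$. I would present the structural induction, since it matches the computational style of the surrounding development.
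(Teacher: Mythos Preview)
Your proposal is correct and follows essentially the same approach as the paper: structural induction on $C$, with the base cases handled by the hypothesis (for $\kskip$) and by preservation of implication under substitution (for assignment), and the inductive cases by applying the induction hypothesis together with monotonicity of the connectives. The alternative semantic argument you sketch is a nice aside but is not in the paper's proof.
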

\begin{proof}
  By induction on $C$:
  \begin{description}
    \item[$\kskip$:]
      $$\wp(\kskip, Q) \equiv Q \Rightarrow Q' \equiv \wp(\kskip, Q')$$
    \item[$x \coloneq E$:]
      $$\wp(x \coloneq E, Q) \equiv Q[E/x] \Rightarrow Q'[E/x] \equiv \wp(x \coloneq E, Q')$$
    \item[$C_1; C_2$:]\leavevmode
      \begin{description}
        \item[(1)] $\wp(C_2, Q) \Rightarrow \wp(C_2, Q')$ by induction
        \item[(2)] $\wp(C_1, \wp(C_2, Q)) \Rightarrow \wp(C_1, \wp(C_2, Q'))$ by induction using (1)
        \item[(3)] $\wp(C_1; C_2, Q) \Rightarrow \wp(C_1; C_2, Q')$ by (2) and definition of $\wp$
      \end{description}

    \item[$\ifstmt{E}{C_1}{C_2}$:]~
      \begin{description}
        \item[(1)] $\wp(C_1, Q) \Rightarrow \wp(C_1, Q')$ by induction
        \item[(2)] $\wp(C_2, Q) \Rightarrow \wp(C_2, Q')$ by induction
        \item[(3)] By (1) and (2)
          \begin{align*}
            &(\wp(C_1, Q) \wedge E) \vee (\wp(C_2, Q) \wedge \neg E) \\
            &\quad \Rightarrow (\wp(C_1, Q') \wedge E) \vee (\wp(C_2, Q') \wedge \neg E)
          \end{align*}
        \item[(4)] By (3) and definition of $\wp$
          \begin{align*}
            &\wp(\ifstmt{E}{C_1}{C_2}, Q) \\
            &\quad \Rightarrow \wp(\ifstmt{E}{C_1}{C_2}, Q') \qedhere
          \end{align*}
      \end{description}
  \end{description}
\end{proof}

\subsection{Strongest postconditions}\label{app:sp}

\begin{definition}
  $Q$ is the \emph{strongest postcondition} for precondition $P$ and statement $C$, denoted $\sp(P, C)$, if it is the strongest predicate (WRT $\Rightarrow$) that ensures that if a state satisfies $P$, then after executing $C$, $Q$ holds.
\end{definition}

The following explicit calculations correspond to the previous definition (proved in \cref{thm:hoare-iff-sp}):
\begin{align*}
  &\sp(P, \kskip) = P \\
  &\sp(P, x \coloneq E) = \exists v (x = E[x/v] \wedge P[x/v]) \\
    &\quad \text{where $v \notin \fv(P)$}\\
  &\sp(P, C_1; C_2) = \sp(\sp(P, C_1), C_2) \\
  &\sp(P, \ifstmt{E}{C_1}{C_2}) =\\
  &\quad \sp(P \wedge E, C_1) \vee \sp(P \wedge \neg E, C_2)
\end{align*}

\begin{lemma}[Stronger precondition]\label{lem:sp-strengthen}
  If $P \Rightarrow P'$ then \\ $\sp(P, C) \Rightarrow \sp(P', C)$.
\end{lemma}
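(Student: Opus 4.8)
The statement to prove is \Cref{lem:sp-strengthen}: If $P \Rightarrow P'$ then $\sp(P, C) \Rightarrow \sp(P', C)$. This is the strongest-postcondition dual of \Cref{lem:wp-strengthen-post}, and I expect the proof to follow exactly the same template: structural induction on $C$, using the explicit calculations for $\sp$ given just above the lemma.

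\medskip

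\textbf{Proof plan.} The plan is to proceed by induction on the structure of $C$, mirroring the proof of \Cref{lem:wp-strengthen-post}. Assume $P \Rightarrow P'$ throughout.

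\begin{description}
  \item[$\kskip$:] Here $\sp(P, \kskip) \equiv P \Rightarrow P' \equiv \sp(P', \kskip)$ directly from the hypothesis.
  \item[$x \coloneq E$:] We have $\sp(P, x \coloneq E) \equiv \exists v\,(x = E[x/v] \wedge P[x/v])$ and similarly for $P'$, with $v$ chosen fresh for both $P$ and $P'$. Since $P \Rightarrow P'$ is preserved under the substitution $[x/v]$ (implication is closed under uniform renaming/substitution of a variable not otherwise constrained), we get $P[x/v] \Rightarrow P'[x/v]$, hence $x = E[x/v] \wedge P[x/v] \Rightarrow x = E[x/v] \wedge P'[x/v]$, and existential quantification is monotone, giving the result.
  \item[$C_1; C_2$:] By the induction hypothesis applied to $C_1$, $\sp(P, C_1) \Rightarrow \sp(P', C_1)$; then by the induction hypothesis applied to $C_2$ with these as the new pre-formulas, $\sp(\sp(P,C_1), C_2) \Rightarrow \sp(\sp(P',C_1), C_2)$; unfolding the definition of $\sp$ for sequencing yields $\sp(P, C_1; C_2) \Rightarrow \sp(P', C_1; C_2)$.
  \item[$\ifstmt{E}{C_1}{C_2}$:] From $P \Rightarrow P'$ we get $P \wedge E \Rightarrow P' \wedge E$ and $P \wedge \neg E \Rightarrow P' \wedge \neg E$. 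Applying the induction hypothesis to $C_1$ and $C_2$ respectively gives $\sp(P \wedge E, C_1) \Rightarrow \sp(P' \wedge E, C_1)$ and $\sp(P \wedge \neg E, C_2) \Rightarrow \sp(P' \wedge \neg E, C_2)$. Disjunction is monotone, so taking the disjunction of both sides and unfolding the definition of $\sp$ for conditionals gives $\sp(P, \ifstmt{E}{C_1}{C_2}) \Rightarrow \sp(P', \ifstmt{E}{C_1}{C_2})$.
\end{description}

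\medskip

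\textbf{Main obstacle.} None of the cases involves real difficulty; the only point requiring a moment's care is the assignment case, where one must be sure that the freshness side-condition ``$v \notin \fv(P)$'' can be met simultaneously for $P$ and $P'$ (just pick $v$ outside $\fv(P) \cup \fv(P') \cup \fv(E)$) and that $P \Rightarrow P'$ implies $P[x/v] \Rightarrow P'[x/v]$. Since the paper treats assertions semantically and appeals to FOL reasoning, this substitution-monotonicity of implication is immediate. Everything else is monotonicity of $\wedge$, $\vee$, and $\exists$ together with the induction hypothesis, exactly as in \Cref{lem:wp-strengthen-post}.
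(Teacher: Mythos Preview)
Your proposal is correct and follows essentially the same approach as the paper: structural induction on $C$, handling each case by monotonicity of the logical connectives and the induction hypothesis, exactly mirroring the proof of \Cref{lem:wp-strengthen-post}. The paper's proof is in fact slightly terser in the assignment case (it simply asserts $P[x/v] \Rightarrow P'[x/v]$ ``by logic''), so your explicit remark about choosing $v$ fresh for both $P$ and $P'$ is, if anything, a small improvement in rigor.
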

\begin{proof}
  By induction on $C$:

  \begin{description}
    \item[$\kskip$:]
      $\sp(P, \kskip) \equiv P \Rightarrow P' \equiv \sp(P', \kskip)$.
    \item[$x \coloneq E$:] Note that $P[x/v] \Rightarrow P'[x/v]$ by logic. Then:
      \begin{align*}
        \sp(P, x \coloneq E) &\equiv \exists v (x = E[x/v] \wedge P[x/v]) &\text{defn $\sp$} \\
          &\Rightarrow \exists v (x = E[x/v] \wedge P'[x/v]) &\text{logic} \\
          &\equiv \sp(P', x \coloneq E) &\text{defn $\sp$}
      \end{align*}
    \item[$C_1; C_2$:]\leavevmode
      \begin{description}
        \item[(1)] $\sp(P, C_1) \Rightarrow \sp(P', C_1)$ by induction
        \item[(2)] $\sp(\sp(P, C_1), C_2) \Rightarrow \sp(\sp(P', C_1), C_2)$ by induction using (1)
        \item[(3)] By (2) and definition of $\sp$,
          \begin{align*}
            \sp(P, C_1; C_2) &\equiv \sp(\sp(P, C_1), C_2) \\
              &\Rightarrow \sp(\sp(P', C_1), C_2) \\
              &\equiv \sp(P', C_1; C_2)
          \end{align*}
      \end{description}
    \item[$\ifstmt{E}{C_1}{C_2}$:]\leavevmode
      \begin{description}
        \item[(1)] $P \wedge E \Rightarrow P' \wedge E$ by logic
        \item[(2)] $\sp(P \wedge E, C_1) \Rightarrow \sp(P' \wedge E, C_1)$ by induction using (1)
        \item[(3)] $P \wedge \neg E \Rightarrow P' \wedge \neg E$ by logic
        \item[(4)] $\sp(P \wedge \neg E, C_2) \Rightarrow \sp(P' \wedge \neg E, C_2)$ by induction using (3)
        \item[(5)]
          \begin{align*}
            &\sp(P, \ifstmt{E}{C_1}{C_2}) \\
              &\quad\equiv \sp(P \wedge E, C_1) \vee \sp(P \wedge \neg E, C_2) &\text{defn $\sp$} \\
              &\quad\Rightarrow \sp(P' \wedge E, C_1) \vee \sp(P' \wedge \neg E, C_2) &\text{(2), (4), logic} \\
              &\quad\equiv \sp(P', \ifstmt{E}{C_1}{C_2}) &\text{defn $\sp$} &\qedhere
          \end{align*}
      \end{description}
  \end{description}
\end{proof}

\begin{lemma}\label{lem:sp-false}
  $\sp(\bot, C) \equiv \bot$
\end{lemma}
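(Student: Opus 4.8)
The plan is to proceed by structural induction on $C$, using the explicit calculations of $\sp$ from \Cref{app:sp}. This is the natural route given that the paper deliberately avoids formalizing states; the semantic argument (``$\bot$ has no models, hence no reachable states'') is morally the content, but the syntactic induction is what fits the available machinery.

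First I would handle the two base cases. For $\kskip$, $\sp(\bot, \kskip) = \bot$ holds definitionally. For $x \coloneq E$, unfolding the definition gives $\sp(\bot, x \coloneq E) = \exists v\,(x = E[x/v] \wedge \bot[x/v])$ where $v \notin \fv(\bot)$; since $\bot[x/v] \equiv \bot$, the body is equivalent to $\bot$, and $\exists v\,\bot \equiv \bot$ by first-order logic.

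Next the two inductive cases. For sequencing, $\sp(\bot, C_1; C_2) = \sp(\sp(\bot, C_1), C_2)$; by the induction hypothesis on $C_1$ we have $\sp(\bot, C_1) \equiv \bot$, so this is $\sp(\bot, C_2)$ (replacing equivalents, which is sound since we work up to $\equiv$; one could also route this through \Cref{lem:sp-strengthen} to be fully explicit), and by the induction hypothesis on $C_2$ this is $\equiv \bot$. For the conditional, $\sp(\bot, \ifstmt{E}{C_1}{C_2}) = \sp(\bot \wedge E, C_1) \vee \sp(\bot \wedge \neg E, C_2)$; since $\bot \wedge E \equiv \bot$ and $\bot \wedge \neg E \equiv \bot$, by the induction hypotheses both disjuncts are $\equiv \bot$, and $\bot \vee \bot \equiv \bot$.

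There is no real obstacle here: every step is a one-line equivalence. The only point requiring a sentence of care is the appeal, in the inductive cases, to substituting logically equivalent assertions inside $\sp(\cdot, C_2)$ and inside the disjunction — justified either by the paper's convention that propositions are identified up to $\equiv$ (so $\sp$ is a function on equivalence classes) or, if one prefers an explicit lemma, by \Cref{lem:sp-strengthen} applied in both directions.
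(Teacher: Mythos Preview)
Your proposal is correct and matches the paper's own proof essentially step for step: structural induction on $C$, with the same four cases handled by unfolding the definition of $\sp$ and simplifying via first-order equivalences. The paper likewise relies on its convention that formulas are identified up to $\equiv$ to substitute inside $\sp$, so your side remark about that point (and the alternative via \Cref{lem:sp-strengthen}) is apt but unnecessary for the argument as presented.
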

\begin{proof}
  By induction on $C$:
  \begin{description}
    \item[$\kskip$:]
      $\sp(\bot, \kskip) \equiv \bot$ by definition.
    \item[$x \coloneq E$:]
      $\sp(\bot, x \coloneq E) \equiv \exists v (x = E[x/v] \wedge \bot) \equiv \bot$ by logic.
    \item[$\ifstmt{E}{C_1}{C_2}$:]
      \begin{align*}
        &\sp(\bot, \ifstmt{E}{C_1}{C_2}) \\
          &\quad\equiv \sp(\bot \wedge E, C_1) \vee \sp(\bot \wedge \neg E, C_2) &\text{defn $\sp$}\\
          &\quad\equiv \sp(\bot, C_1) \vee \sp(\bot, C_2) &\text{logic}\\
          &\quad\equiv \bot \vee \bot &\text{induction}\\
          &\quad\equiv \bot &\text{logic}
      \end{align*}
    \item[$C_1; C_2$:]
      \begin{align*}
        \sp(\bot, C_1; C_2) &\equiv \sp(\sp(\bot, C_1), C_2) &\text{defn $\sp$} \\
          &\equiv \sp(\bot, C_2) &\text{induction} \\
          &\equiv \bot &\text{induction} &\qedhere
      \end{align*}
  \end{description}
\end{proof}

\begin{lemma}\label{lem:sp-is-false}
  $\sp(P, C) \equiv \bot \implies P \equiv \bot$

  Note: for a more expressive language, we would also need termination for this to hold.
\end{lemma}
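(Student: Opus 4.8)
The plan is to reduce the claim to two auxiliary facts: (i) $P \Rightarrow \wp(C, \sp(P,C))$ for every $P$ and $C$, and (ii) $\wp(C, \bot) \equiv \bot$. Granting these, the lemma is immediate: if $\sp(P,C) \equiv \bot$, then, using \Cref{lem:wp-strengthen-post} to rewrite under $\wp$, we have $P \Rightarrow \wp(C, \sp(P,C)) \equiv \wp(C, \bot) \equiv \bot$ by (ii), hence $P \equiv \bot$.

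For (i), recall that $\sp(P,C)$ is by definition \emph{a} postcondition for which $\htrip{P}{C}{\sp(P,C)}$ is valid, and that $\wp(C, \sp(P,C))$ is the \emph{weakest} precondition guaranteeing the postcondition $\sp(P,C)$; since $P$ is one such precondition, $P \Rightarrow \wp(C, \sp(P,C))$. Equivalently, one can instantiate the chain $P \Rightarrow \wp(C,Q) \iff \htrip{P}{C}{Q} \iff \sp(P,C) \Rightarrow Q$ (the first equivalence being the defining property of $\wp$, validated by \Cref{thm:wp-iff-hoare}, and the second being \Cref{thm:hoare-iff-sp}) at $Q \coloneq \sp(P,C)$, where the rightmost side holds trivially.

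For (ii), I would induct on $C$ using the explicit calculation of $\wp$ from \Cref{app:wp}. The cases $\kskip$ and $x \coloneq E$ are immediate, since $\wp(\kskip, \bot) = \bot$ and $\wp(x \coloneq E, \bot) = \bot[x/E] \equiv \bot$; for $C_1 ; C_2$ we get $\wp(C_1, \wp(C_2, \bot)) \equiv \wp(C_1, \bot) \equiv \bot$ by the induction hypotheses (and \Cref{lem:wp-strengthen-post}); and for $\ifstmt{E}{C_1}{C_2}$ we get $(\bot \wedge E) \vee (\bot \wedge \neg E) \equiv \bot$ by the induction hypotheses and propositional logic. This is exactly the step that uses that our language is terminating: it would fail in the presence of \texttt{while} loops, where $\wp$ of a divergent loop equals $\top$ for any postcondition — which is the phenomenon the lemma's side note is warning about.

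I do not anticipate a real obstacle here; the only delicate point is locating the termination assumption, which the $\wp$-based argument conveniently confines to fact (ii). A second, equally short route is to induct on $C$ directly, mirroring \Cref{lem:sp-false}: the $\kskip$, sequencing, and conditional cases are routine (for the conditional, $\sp(P,C) \equiv \bot$ forces both $\sp(P \wedge E, C_1) \equiv \bot$ and $\sp(P \wedge \neg E, C_2) \equiv \bot$, whence $P \equiv (P \wedge E) \vee (P \wedge \neg E) \equiv \bot$), and the one case requiring thought is $x \coloneq E$, where one checks that $\exists v\, (x = E[x/v] \wedge P[x/v])$ is satisfiable whenever $P$ is — the witness for $x$ can always be chosen to equal $E[x/v]$, which does not mention $x$ freely.
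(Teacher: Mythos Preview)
Your proposal is correct, and your primary route is genuinely different from the paper's. The paper proves the lemma by direct induction on $C$ (your ``second route''), and the only nontrivial case is assignment: from $\exists v\,(x = E[x/v] \wedge P[x/v]) \equiv \bot$ the paper extracts $P \equiv \bot$ via an explicit Kripke-semantic argument, choosing for each model a witness $\mathsf{z}$ for $x$ that makes the disequality $x \neq E[x/v]$ fail and hence forces $\neg P[x/v]$. Your approach sidesteps this entirely: the assignment case of your fact (ii) is the purely syntactic observation $\bot[x/E] \equiv \bot$, and all the work is absorbed into the $\wp$/$\sp$ adjunction you cite as fact (i). This is cleaner and, as you note, isolates the termination hypothesis in exactly one place. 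The only cost is that you invoke \Cref{thm:wp-iff-hoare} and \Cref{thm:hoare-iff-sp} (or the lemmas behind them), which in the paper's linear ordering appear \emph{after} this lemma; there is no circularity, since neither theorem depends on \Cref{lem:sp-is-false}, but it is a forward reference. Your sketch of the second route matches the paper's proof and correctly identifies where the real content lies.
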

\begin{proof}
  By induction on $C$:
  \begin{description}
    \item[$\kskip$:]
      $\bot \equiv \sp(P, \kskip) \equiv P$ by definition.
    \item[$x \coloneq E$:]\leavevmode
      \begin{description}
        \item[(1)] $\bot \equiv \sp(P, x \coloneq E)$ by assumption
        \item[(2)] $\sp(P, x \coloneq E) \equiv \exists v (x = E[x/v] \wedge P[x/v])$
        \item[(3)] $\top \equiv \forall v (x \ne E[x/v] \vee \neg P[x/v])$ by logic using (1) and (2)
        \item[(4)] $\top \equiv \forall v \neg P[x/v]$ by logic using (3)
        
          We prove this using a semantic argument (assuming standard Kripke semantics for FOL):

          Let $\mathcal{M}$ be a model and $\mathsf{y} \in \lvert \mathcal{M} \rvert$.
          By (3) we have $\mathcal{M} \vDash \forall v (x \ne E[x/v] \vee \neg P[x/v])$, and thus $\mathcal{M}[v \mapsto \mathsf{y}] \vDash x \ne E[x/v] \vee \neg P[x/v]$.

          Let $\mathsf{z} = \llbracket E[x/v] \rrbracket_{\mathcal{M}[v \mapsto \mathsf{y}]}$. Then we have $\mathcal{M}[v \mapsto \mathsf{y}, x \mapsto \mathsf{z}] \nvDash x \ne E[x/v]$.

          Thus we have $\mathcal{M}[v \mapsto \mathsf{y}, x \mapsto \mathsf{z}] \vDash \neg P[x/v]$. Since $x \notin \fv(P[x/v])$, we have $\mathcal{M}[v \mapsto \mathsf{y}] \vDash \neg P[x/v]$. Thus we can conclude that $\mathcal{M} \vDash \forall v \neg P[x/v]$.
          
        \item[(5)] $\top \equiv \neg P[x/v] \equiv \neg P$ by logic using (4) and since $v \notin \fv(P)$
        \item[(6)] $\bot \equiv P$ by logic using (5)
      \end{description}

    \item[$C_1; C_2$]\leavevmode
      \begin{description}
        \item[(1)] $\bot \equiv \sp(P, C_1; C_2)$ by assumption
        \item[(2)] $\sp(P, C_1; C_2) \equiv \sp(\sp(P, C_1), C_2)$ by definition
        \item[(3)] $\bot \equiv \sp(\sp(P, C_1), C_2)$ by logic using (1) and (2)
        \item[(4)] $\bot \equiv \sp(P, C_1)$ by induction using (3)
        \item[(5)] $\bot \equiv P$ by induction using (4)
      \end{description}

    [$\ifstmt{E}{C_1}{C_2}$]\leavevmode
      \begin{description}
        \item[(1)] $\bot \equiv \sp(P, \ifstmt{E}{C_1}{C_2})$ by assumption
        \item[(2)] $\sp(P, \ifstmt{E}{C_1}{C_2}) \equiv \sp(P \wedge E, C_1) \vee \sp(P \wedge \neg E, C_2)$ by definition
        \item[(3)] $\bot \equiv \sp(P \wedge E, C_1) \vee \sp(P \wedge \neg E, C_2)$ by (1) and (2)
        \item[(4)] $\bot \equiv \sp(P \wedge E, C_1)$ by logic using (3)
        \item[(5)] $\bot \equiv \sp(P \wedge \neg E, C_2)$ by logic using (3)
        \item[(6)] $\bot \equiv P \wedge E$ by induction using (4)
        \item[(7)] $\bot \equiv P \wedge \neg E$ by induction using (5)
        \item[(8)] $\bot \equiv (P \wedge E) \vee (P \wedge \neg E) \equiv P \wedge (E \vee \neg E) \equiv P$ by logic using (6) and (7) \qedhere
      \end{description}
  \end{description}
\end{proof}

\begin{lemma}\label{lem:sp-sat}
  If $P \in \SatFormula$ then $\sp(P, C) \in \SatFormula$.
\end{lemma}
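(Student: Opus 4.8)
The plan is to obtain this immediately as the contrapositive of \Cref{lem:sp-is-false}. By definition, a formula lies in $\SatFormula$ exactly when it is satisfiable, i.e., when it is not equivalent to $\bot$ (recall the convention that propositions are identified with their equivalence classes, so $\{\top,\bot\}$ collapses appropriately). Thus the claim ``$P \in \SatFormula \implies \sp(P,C) \in \SatFormula$'' unfolds to ``$P \not\equiv \bot \implies \sp(P,C) \not\equiv \bot$'', which is precisely the contrapositive of the implication $\sp(P,C) \equiv \bot \implies P \equiv \bot$ established in \Cref{lem:sp-is-false}.

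So the proof is a one-liner: assume $P \in \SatFormula$, suppose for contradiction that $\sp(P,C) \equiv \bot$, apply \Cref{lem:sp-is-false} to conclude $P \equiv \bot$, contradicting $P \in \SatFormula$; hence $\sp(P,C) \not\equiv \bot$, i.e., $\sp(P,C) \in \SatFormula$. No induction on $C$ is needed here, since all the work (in particular the semantic argument handling the assignment case) has already been carried out in \Cref{lem:sp-is-false}; the only modelling assumption implicitly in play is the same one noted there, namely that in this language every command terminates, so that $\sp(P,C) \equiv \bot$ genuinely forces $P \equiv \bot$.

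There is essentially no obstacle at this step; the only thing to be careful about is the bookkeeping identifying $\SatFormula$-membership with $\not\equiv \bot$, and making sure the direction of the contrapositive is taken correctly. If one preferred a self-contained argument one could instead redo the induction on $C$ directly (using \Cref{lem:sp-false} and \Cref{lem:sp-strengthen} in the sequencing and conditional cases), but invoking \Cref{lem:sp-is-false} is cleaner and avoids repeating the delicate assignment case.
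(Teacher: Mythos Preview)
Your proposal is correct and matches the paper's proof essentially line for line: the paper also proves the contrapositive, assuming $\sp(P,C)\notin\SatFormula$ (hence $\sp(P,C)\equiv\bot$), invoking \Cref{lem:sp-is-false} to get $P\equiv\bot$, and concluding $P\notin\SatFormula$. Your additional remarks about the termination assumption and the option of a direct induction are accurate but not needed here.
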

\begin{proof}
  We prove the contrapositive: assume $\sp(P, C) \notin \\ \SatFormula$, then $\sp(P, C) \equiv \bot$, and thus by \cref{lem:sp-is-false} $P \equiv \bot$. Therefore $P \notin \SatFormula$.
\end{proof}

\begin{lemma}\label{lem:sp-exists}
  If $y \notin \fv(C)$ then
  $\sp(\exists y\, P, C) \equiv \exists y \sp(P, C)$.
\end{lemma}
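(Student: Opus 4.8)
The plan is to proceed by induction on $C$, mirroring the structure of the earlier $\sp$ lemmas (\cref{lem:sp-strengthen}, \cref{lem:sp-false}). Throughout, I will freely use two elementary facts about first-order quantifiers: $\exists y$ distributes over $\vee$, and $(\exists y\, P) \wedge R \equiv \exists y\,(P \wedge R)$ whenever $y \notin \fv(R)$. Note that the hypothesis $y \notin \fv(C)$ specializes, in each case, to the obvious conditions on the subcomponents of $C$ (e.g.\ $y \ne x$ and $y \notin \fv(E)$ for $C = x \coloneq E$, and $y \notin \fv(E), \fv(C_1), \fv(C_2)$ for a conditional).

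For $\kskip$ the claim is immediate from $\sp(R, \kskip) \equiv R$. For $C_1; C_2$, I would apply the induction hypothesis to $C_1$ (with precondition $\exists y\, P$) to get $\sp(\exists y\, P, C_1) \equiv \exists y\, \sp(P, C_1)$, then apply it again to $C_2$ (with precondition $\sp(P, C_1)$, using $y \notin \fv(C_2)$) to pull the $\exists y$ outermost; unfolding the definition of $\sp$ for sequencing closes the case. For $\ifstmt{E}{C_1}{C_2}$, I would first rewrite $(\exists y\, P) \wedge E \equiv \exists y\,(P \wedge E)$ and similarly for $\neg E$ (valid since $y \notin \fv(E)$), apply the induction hypotheses to $C_1$ and $C_2$, and then use that $\exists y$ distributes over the disjunction in the $\sp$-definition for conditionals.

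The main obstacle is the assignment case $C = x \coloneq E$, which is the only place variable capture is a real concern. Here $\sp(\exists y\, P, x \coloneq E) \equiv \exists v\,(x = E[x/v] \wedge (\exists y\, P)[x/v])$ for $v$ chosen fresh. I would insist that $v$ be chosen so that $v \notin \fv(P) \cup \{y\}$ (possible since only finitely many variables are constrained). Then, because $y \ne x$ and $y \ne v$, the substitution commutes past the binder: $(\exists y\, P)[x/v] \equiv \exists y\,(P[x/v])$. Moreover $y \notin \fv\bigl(x = E[x/v]\bigr)$, since $y \ne x$ and $y \notin \fv(E)$ implies $y \notin \fv(E[x/v])$ (using $v \ne y$). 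Hence
\begin{align*}
  \sp(\exists y\, P, x \coloneq E)
    &\equiv \exists v\,\bigl(x = E[x/v] \wedge \exists y\,(P[x/v])\bigr) \\
    &\equiv \exists v\,\exists y\,\bigl(x = E[x/v] \wedge P[x/v]\bigr) \\
    &\equiv \exists y\,\exists v\,\bigl(x = E[x/v] \wedge P[x/v]\bigr)
     \equiv \exists y\, \sp(P, x \coloneq E),
\end{align*}
where the last equality uses $v \notin \fv(P)$. I expect the only delicate points in the whole proof to be the bookkeeping on the freshness of $v$ and the justification that substitution commutes with $\exists y$ under the stated side conditions; everything else is routine propositional/first-order reasoning combined with the induction hypothesis.
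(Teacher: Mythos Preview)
Your proposal is correct and follows essentially the same route as the paper: induction on $C$, with the assignment case handled by choosing the fresh variable $v$ distinct from $y$, commuting the substitution past $\exists y$ (using $y \ne x$), pulling $\exists y$ outside via $y \notin \fv(E[x/v])$, and swapping quantifiers. Your treatment of the bookkeeping around freshness of $v$ is, if anything, slightly more explicit than the paper's.
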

\begin{proof}
  By induction on $C$:
  \begin{description}
    \item[$\kskip$:] $\sp(\exists y\, P, \kskip) \equiv \exists y\, P \equiv \exists y\, \sp(P, \kskip)$
    \item[$x \coloneq E$:] Assuming $y \not\equiv v$,
      \begin{align*}
        &\sp(\exists y\, P, x \coloneq E) \\
          &\quad \equiv \exists v (x = E[x/v] \wedge (\exists y\, P)[x/v]) &\text{defn $\sp$}\\
          &\quad \equiv \exists v (x = E[x/v] \wedge (\exists y\, P[x/v])) &y \not\equiv x \in \fv(x \coloneq E) \\
          &\quad \equiv \exists y \exists v (x = E[x/v] \wedge P[x/v]) &y \notin \fv(E[x/v])\\
          &\quad \equiv \exists y \, \sp(P, x \coloneq E) &\text{defn $\sp$}
      \end{align*}
    \item[$C_1; C_2$:]
      \begin{align*}
        \sp(\exists y\, P, C_1; C_2) &\equiv \sp(\sp(\exists y\, P, C_1), C_2) &\text{defn $\sp$} \\
          &\equiv \sp(\exists y\, \sp(P, C_1), C_2) &\text{induction} \\
          &\equiv \exists y\, \sp(\sp(P, C_1), C_2) &\text{induction} \\
          &\equiv \exists y\, \sp(P, C_1; C_2) &\text{defn $\sp$}
      \end{align*}
    \item[$\ifstmt{E}{C_1}{C_2}$:]
      \begin{align*}
        &\sp(\exists y\, P, \ifstmt{E}{C_1}{C_2}) \\
        &\quad\equiv \sp((\exists y\, P) \wedge E, C_1) \vee \sp((\exists y\, P) \wedge \neg E, C_2) &\text{defn $\sp$} \\
        &\quad\equiv \sp(\exists y (P \wedge E), C_1) \vee \sp(\exists y (P \wedge \neg E), C_2) &y \notin \fv(E) \\
        &\quad\equiv (\exists y\, \sp(P \wedge E, C_1)) \vee (\exists y\, \sp(P \wedge \neg E, C_2)) &\text{induction} \\
        &\quad\equiv \exists y (\sp(P \wedge E, C_1) \vee \sp(P \wedge \neg E, C_2)) &\text{logic} \\
        &\quad\equiv \exists y \, \sp(P, \ifstmt{E}{C_1}{C_2}) &\text{defn $\sp$} &\qedhere
      \end{align*}
  \end{description}
\end{proof}

\begin{lemma}\label{lem:sp-disj}
  $\sp(P_1 \vee P_2, C) \equiv \sp(P_1, C) \vee \sp(P_2, C)$
\end{lemma}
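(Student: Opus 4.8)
The plan is to prove the equivalence by structural induction on $C$, in the same style as \cref{lem:sp-strengthen}, \cref{lem:sp-false}, and \cref{lem:sp-exists}. Before starting the induction it is worth noting that one direction is essentially free: since $P_1 \Rightarrow P_1 \vee P_2$ and $P_2 \Rightarrow P_1 \vee P_2$, \cref{lem:sp-strengthen} gives $\sp(P_1, C) \Rightarrow \sp(P_1 \vee P_2, C)$ and $\sp(P_2, C) \Rightarrow \sp(P_1 \vee P_2, C)$, hence $\sp(P_1, C) \vee \sp(P_2, C) \Rightarrow \sp(P_1 \vee P_2, C)$. So only the $\Rightarrow$ inclusion genuinely needs work --- though, as it happens, the induction below delivers full equivalence at every case without extra effort, so I would simply carry the equivalence through directly.

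For the induction: in the $\kskip$ case both sides unfold (by the definition of $\sp$) to $P_1 \vee P_2$. In the $x \coloneq E$ case, unfold the left side to $\exists v\,(x = E[x/v] \wedge (P_1 \vee P_2)[x/v])$ with $v \notin \fv(P_1) \cup \fv(P_2)$, then push the substitution through the disjunction, distribute $\wedge$ over $\vee$, and distribute $\exists v$ over $\vee$; each of these is a standard first-order equivalence, and the result is $\sp(P_1, x \coloneq E) \vee \sp(P_2, x \coloneq E)$. In the $C_1; C_2$ case, unfold to $\sp(\sp(P_1 \vee P_2, C_1), C_2)$, apply the induction hypothesis for $C_1$ to rewrite the inner $\sp$ as $\sp(P_1, C_1) \vee \sp(P_2, C_1)$, then apply the induction hypothesis for $C_2$ to split the outer $\sp$. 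In the $\ifstmt{E}{C_1}{C_2}$ case, first use the propositional equivalences $(P_1 \vee P_2) \wedge E \equiv (P_1 \wedge E) \vee (P_2 \wedge E)$ and $(P_1 \vee P_2) \wedge \neg E \equiv (P_1 \wedge \neg E) \vee (P_2 \wedge \neg E)$, then apply the induction hypotheses for $C_1$ and $C_2$ to the two branches, and finally regroup the resulting four disjuncts using commutativity and associativity of $\vee$ into $\sp(P_1, \ifstmt{E}{C_1}{C_2}) \vee \sp(P_2, \ifstmt{E}{C_1}{C_2})$.

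I do not expect any real obstacle: every case collapses to the induction hypothesis together with a routine propositional or first-order equivalence, exactly as in the neighbouring $\sp$ lemmas. The only spot that needs a moment's care is the assignment case, where one must recall that substitution commutes with the connectives (so $(P_1 \vee P_2)[x/v] \equiv P_1[x/v] \vee P_2[x/v]$) and that $\exists$ distributes over $\vee$. If a more conceptual argument were preferred, the lemma is also immediate semantically --- $\sp(P, C)$ denotes the image of the states satisfying $P$ under the transition relation of $C$, and taking images commutes with unions --- but the inductive proof matches the syntactic style used throughout \cref{app:sp}.
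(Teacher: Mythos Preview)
Your proposal is correct and matches the paper's proof essentially step for step: both proceed by structural induction on $C$, handling each case by unfolding $\sp$, applying the induction hypothesis where applicable, and invoking the same routine propositional/first-order equivalences (distribution of substitution, $\wedge$, and $\exists$ over $\vee$). Your additional remarks about the $\Leftarrow$ direction via \cref{lem:sp-strengthen} and the semantic alternative are extra commentary not present in the paper, but the core argument is the same.
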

\begin{proof}
  By induction on $C$:
  \begin{description}
    \item[$\kskip$:]
      $\sp(P_1 \vee P_2, \kskip) \equiv P_1 \vee P_2 \equiv \\ \sp(P_1, \kskip) \vee \sp(P_2, \kskip)$.
    \item[$x \coloneq E$:]
      \begin{align*}
        &\sp(P_1 \vee P_2, x \coloneq E) \\
          &\quad \equiv \exists v (x = E[x/v] \wedge (P_1 \vee P_2)[x/v]) &\text{defn $\sp$}\\
          &\quad \equiv \exists v (x = E[x/v] \wedge (P_1[x/v] \vee P_2[x/v])) &\text{subst.}\\
          &\quad \equiv \exists v ((x = E[x/v] \wedge P_1[x/v]) ~\vee \\
            &\hspace{3.9em} (x = E[x/v] \wedge P_2[x/v])) &\text{logic} \\
          &\quad \equiv (\exists v (x = E[x/v] \wedge P_1[x/v])) ~\vee\\
            &\hspace{2.4em} (\exists v (x = E[x/v] \wedge P_2[x/v])) &\text{logic}\\
          &\quad \equiv \sp(P_1, x \coloneq E) \vee \sp(P_2, x \coloneq E) &\text{defn $\sp$}
      \end{align*}

    \item[$C_1; C_2$:]
      \begin{align*}
        &\sp(P_1 \vee P_2, C_1; C_2) \\
          &\quad \equiv \sp(\sp(P_1 \vee P_2, C_1), C_2) &\text{defn $\sp$}\\
          &\quad \equiv \sp(\sp(P_1, C_1) \vee \sp(P_2, C_1), C_2) &\text{induction}\\
          &\quad \equiv \sp(\sp(P_1, C_1), C_2) \vee \sp(\sp(P_2, C_1), C_2) &\text{induction}\\
          &\quad \equiv \sp(P_1, C_1; C_2) \vee \sp(P_2, C_1; C_2) &\text{defn $\sp$}
      \end{align*}

    \item[$\ifstmt{E}{C_1}{C_2}$:]
      \begin{equation*}
        \begin{array}{p{1em}rlp{1em}r}
          \multicolumn{5}{l}{\sp(P_1 \vee P_2, \ifstmt{E}{C_1}{C_2})} \\
          &\equiv &\sp(E \wedge (P_1 \vee P_2), C_1) ~\vee \\
            &&\sp(\neg E \wedge (P_1 \vee P_2), C_2) &&\text{defn $\sp$} \\
          &\equiv &\sp((E \wedge P_1) \vee (E \wedge P_2), C_1) ~\vee \\
            &&\sp((\neg E \wedge P_1) \vee (\neg E \wedge P_2), C_2) &&\text{logic}\\
          &\equiv &\sp(E \wedge P_1, C_1) \vee \sp(E \wedge P_2, C_1) ~\vee \\
            &&\sp(\neg E \wedge P_1, C_2) \vee \sp(\neg E \wedge P_2, C_2) &&\text{\Cref{lem:sp-disj}} \\
          &\equiv &(\sp(E \wedge P_1, C_1) \vee \sp(\neg E \wedge P_1, C_2)) ~\vee \\
            &&(\sp(E \wedge P_2, C_1) \vee \sp(\neg E \wedge P_2, C_2)) &&\text{logic} \\
          &\equiv &\sp(P_1, \ifstmt{E}{C_1}{C_2}) ~\vee \\
            &&\sp(P_2, \ifstmt{E}{C_1}{C_2}) &&\text{defn $\sp$}
        \end{array}\qedhere
      \end{equation*}
  \end{description}

\end{proof}

\begin{lemma}[Frame rule]\label{lem:sp-frame}
  If $\mod(C) \cap \fv(P) = \emptyset$ then \\ $\sp(P \wedge R, C) \equiv P \wedge \sp(R, C)$.
\end{lemma}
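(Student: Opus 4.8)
The plan is to prove this by structural induction on $C$, following the explicit calculation of $\sp$ in each case. The side condition $\mod(C) \cap \fv(P) = \emptyset$ propagates to every subcommand (since $\mod$ of a sequence or conditional is the union of the $\mod$s of its parts), so the induction hypothesis always applies with the same $P$.

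For $\kskip$ the claim is immediate: $\mod(\kskip) = \emptyset$, and $\sp(P \wedge R, \kskip) \equiv P \wedge R \equiv P \wedge \sp(R, \kskip)$. For $C_1; C_2$, I would unfold $\sp(P \wedge R, C_1; C_2) \equiv \sp(\sp(P \wedge R, C_1), C_2)$, apply the induction hypothesis for $C_1$ to rewrite the inner term as $P \wedge \sp(R, C_1)$, then apply the induction hypothesis for $C_2$ (valid since $\mod(C_2) \cap \fv(P) = \emptyset$) to pull $P$ back out, yielding $P \wedge \sp(\sp(R, C_1), C_2) \equiv P \wedge \sp(R, C_1; C_2)$. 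For $\ifstmt{E}{C_1}{C_2}$, I would unfold to $\sp((P \wedge R) \wedge E, C_1) \vee \sp((P \wedge R) \wedge \neg E, C_2)$, reassociate each conjunct as $P \wedge (R \wedge E)$ resp. $P \wedge (R \wedge \neg E)$, apply the two induction hypotheses, and then use distributivity of $\wedge$ over $\vee$ to obtain $P \wedge (\sp(R \wedge E, C_1) \vee \sp(R \wedge \neg E, C_2)) \equiv P \wedge \sp(R, \ifstmt{E}{C_1}{C_2})$.

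The assignment case $x \coloneq E$ is where the real work lies, and I expect it to be the main obstacle. Here the side condition gives $x \notin \fv(P)$. Picking $v$ fresh for $P$, $R$, and $E$, I would compute $\sp(P \wedge R, x \coloneq E) \equiv \exists v\,(x = E[x/v] \wedge (P \wedge R)[x/v])$; since $x \notin \fv(P)$ we have $P[x/v] \equiv P$, so $(P \wedge R)[x/v] \equiv P \wedge R[x/v]$; and since $v \notin \fv(P)$, the conjunct $P$ can be pulled out of the existential, giving $P \wedge \exists v\,(x = E[x/v] \wedge R[x/v]) \equiv P \wedge \sp(R, x \coloneq E)$. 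The care needed is exactly in justifying the two substitution/freshness steps ($P[x/v] \equiv P$ and extracting $P$ past $\exists v$), both of which follow from standard FOL reasoning about free variables; the rest is routine.
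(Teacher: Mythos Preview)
Your proposal is correct and follows essentially the same approach as the paper: structural induction on $C$, with the assignment case handled by the two freshness observations $P[x/v] \equiv P$ (from $x \notin \fv(P)$) and pulling $P$ past $\exists v$ (from $v \notin \fv(P)$), and the remaining cases by routine unfolding, induction hypotheses, and distributivity.
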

\begin{proof}
  By induction on $C$:

  \begin{description}
    \item[$\kskip$:] $\sp(P \wedge R, \kskip) \equiv P \wedge R \equiv P \wedge \sp(R, \kskip)$
    
    \item[$x \coloneq E$:]
      Let $v \notin \fv(P)$. Note that $x \in \fv(x \coloneq E)$, thus from our assumptions $x \notin \fv(P)$ Then,
      \begin{align*}
        &\sp(P \wedge R, x \coloneq E) \\
          &\quad \equiv \exists v (x = E[x/v] \wedge P[x/v] \wedge R[x/v]) &\text{defn $\sp$}\\
          &\quad \equiv \exists v (x = E[x/v] \wedge P \wedge R[x/v]) &\text{$x \notin \fv(P)$} \\
          &\quad \equiv P \wedge \exists v(x = E[x/v] \wedge R[x/v]) &\text{$v \notin \fv(P)$} \\
          &\quad \equiv P \wedge \sp(R, x \coloneq E) &\text{defn $\sp$}
      \end{align*}
    
    \item[$C_1; C_2$:]
      Note that $x \notin \mod(C_1; C_2)$ thus $x \notin \mod(C_1)$ and $x \ClassNoteNoLine{class name}{note text} \mod(C_2)$. Then,
      \begin{align*}
        \sp(P \wedge R, C_1; C_2) &\equiv \sp(\sp(P \wedge R, C_1), C_2) &\text{defn $\sp$} \\
          &\equiv \sp(P \wedge \sp(R, C_1), C_2) &\text{induction} \\
          &\equiv P \wedge \sp(\sp(R, C_1), C_2) &\text{induction} \\
          &\equiv P \wedge \sp(R, C_1; C_2) &\text{defn $\sp$}
      \end{align*}

    \item[$\ifstmt{E}{C_1}{C_2}$:]
      Note that by assumption \\ $x \notin \mod(\ifstmt{E}{C_1}{C_2})$ thus $x \notin \mod(C_1)$ and $x \notin \mod(C_2)$. Then,
      \begin{align*}
        &\sp(P \wedge R, \ifstmt{E}{C_1}{C_2}) \\
          &\quad\equiv \sp(P \wedge R \wedge E, C_1) \vee \sp(P \wedge R \wedge \neg E, C_2) &\text{defn $\sp$} \\
          &\quad\equiv (P \wedge \sp(R \wedge E, C_1)) \vee (P \wedge \sp(R \wedge \neg E, C_2)) &\text{ind.} \\
          &\quad\equiv P \wedge (\sp(R \wedge E, C_1) \vee \sp(R \wedge \neg E, C_2)) &\text{logic} \\
          &\quad\equiv P \wedge \sp(R, \ifstmt{E}{C_1}{C_2}) &\text{defn $\sp$} &\qedhere
      \end{align*}
  \end{description}
\end{proof}

\begin{lemma}\label{lem:sp-if}
  \begin{align*}
    \sp(P \wedge E, \ifstmt{E}{C_1}{C_2}) &\equiv \sp(P \wedge E, C_1) \\
    \sp(P \wedge \neg E, \ifstmt{E}{C_1}{C_2}) &\equiv \sp(P \wedge \neg E, C_2)
  \end{align*}
\end{lemma}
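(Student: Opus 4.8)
The plan is to unfold the defining clause of $\sp$ on conditionals and simplify using propositional logic together with \cref{lem:sp-false}. Recall that $\sp(R, \ifstmt{E}{C_1}{C_2}) \equiv \sp(R \wedge E, C_1) \vee \sp(R \wedge \neg E, C_2)$.

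For the first equation, I would instantiate this clause with $R := P \wedge E$, obtaining $\sp((P \wedge E) \wedge E, C_1) \vee \sp((P \wedge E) \wedge \neg E, C_2)$. Propositional logic gives $(P \wedge E) \wedge E \equiv P \wedge E$ and $(P \wedge E) \wedge \neg E \equiv \bot$; since the paper identifies propositions up to equivalence, we may rewrite inside $\sp$ to get $\sp(P \wedge E, C_1) \vee \sp(\bot, C_2)$. Then \cref{lem:sp-false} gives $\sp(\bot, C_2) \equiv \bot$, so the disjunction collapses to $\sp(P \wedge E, C_1)$, as required.

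The second equation is entirely symmetric: instantiate the same clause with $R := P \wedge \neg E$, use $(P \wedge \neg E) \wedge E \equiv \bot$ and $(P \wedge \neg E) \wedge \neg E \equiv P \wedge \neg E$, and apply \cref{lem:sp-false} to discard the $\sp(\bot, C_1)$ disjunct, leaving $\sp(P \wedge \neg E, C_2)$.

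There is no substantial obstacle here — the result is essentially a one-line calculation per equation, and no induction on $C$ is needed because the outer structure is already a conditional. The only point requiring a moment's care is the legitimacy of rewriting a subformula under $\sp$ by logical equivalence; this is immediate given the paper's convention that a proposition denotes its equivalence class (alternatively, it follows from two applications of \cref{lem:sp-strengthen}).
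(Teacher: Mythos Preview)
Your proposal is correct and follows essentially the same approach as the paper: unfold the definition of $\sp$ on conditionals, simplify the conjunctions by propositional logic, and invoke \cref{lem:sp-false} to collapse the $\sp(\bot,\cdot)$ disjunct. The paper's proof is exactly this calculation, line for line, so there is nothing to add.
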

\begin{proof}
  \begin{align*}
    &\sp(P \wedge E, \ifstmt{E}{C_1}{C_2}) \\
      &\quad\equiv \sp(P \wedge E \wedge E, C_1) \vee \sp(P \wedge E \wedge \neg E, C_2) &\text{defn $\sp$} \\
      &\quad\equiv \sp(P \wedge E, C_1) \vee \sp(\bot, C_2) &\text{logic} \\
      &\quad\equiv \sp(P \wedge E, C_1) \vee \bot &\text{\Cref{lem:sp-false}} \\
      &\quad\equiv \sp(P \wedge E, C_1) &\text{logic}
  \end{align*}
  \begin{align*}
    &\sp(P \wedge \neg E, \ifstmt{E}{C_1}{C_2}) \\
      &\quad\equiv \sp(P \wedge E \wedge \neg E, C_1) \vee \sp(P \wedge \neg E \wedge \neg E, C_2) &\text{defn sp} \\
      &\quad\equiv \sp(\bot, C_1) \vee \sp(P \wedge \neg E, C_2) &\text{logic} \\
      &\quad\equiv \bot \vee \sp(P \wedge \neg E, C_2) &\text{\Cref{lem:sp-false}} \\
      &\quad\equiv \sp(P \wedge \neg E, C_2) &\text{logic \qedhere}
  \end{align*}
\end{proof}

\subsection{Fixpoint}\label{app:fixpoint}

We can define a function $Q \mapsto \sp(\wp(C, Q), C)$. We demonstrate that this function reaches a fixpoint.

\begin{lemma}\label{lem:sp-wp-equiv}
  $\sp(\wp(C, Q) \wedge P, C) \equiv Q \wedge \sp(P, C)$
\end{lemma}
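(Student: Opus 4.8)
The plan is to prove the identity by structural induction on $C$, in the same style as the preceding lemmas about $\sp$. After unfolding the definitions of $\wp$ (\Cref{app:wp}) and $\sp$ (\Cref{app:sp}), each compound case reduces to the corresponding statement for the sub-commands, so the bulk of the work is bookkeeping, with the induction hypothesis and \Cref{lem:sp-disj} and \Cref{lem:sp-false} doing the heavy lifting.

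For $\kskip$, both sides unfold directly to $Q \wedge P$. For $C_1; C_2$, I would use $\wp(C_1;C_2, Q) \equiv \wp(C_1, \wp(C_2, Q))$ and $\sp(P, C_1;C_2) \equiv \sp(\sp(P,C_1), C_2)$, then apply the IH for $C_1$ to rewrite $\sp(\wp(C_1, \wp(C_2,Q)) \wedge P,\, C_1) \equiv \wp(C_2,Q) \wedge \sp(P,C_1)$, and finally the IH for $C_2$ to obtain $Q \wedge \sp(\sp(P,C_1),C_2)$. For $\ifstmt{E}{C_1}{C_2}$, I would push the conjunction with $P$ through $\sp$ of the conditional; in the $E$-branch the disjunct $\wp(C_2,Q) \wedge \neg E$ of $\wp$ is annihilated (and symmetrically in the $\neg E$-branch), so each branch collapses to $\sp(\wp(C_i, Q) \wedge (P \wedge [\neg]E),\, C_i)$, to which the IH applies; recombining the two branches with \Cref{lem:sp-disj} and distributing $Q$ over the disjunction yields $Q \wedge \sp(P, \ifstmt{E}{C_1}{C_2})$.

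The assignment case $x \coloneq E$ is the crux. Choosing $v$ fresh (distinct from $x$ and outside $\fv(P) \cup \fv(Q) \cup \fv(E)$), the goal becomes
$$\exists v\,\bigl(x = E[x/v] \wedge (Q[x/E])[x/v] \wedge P[x/v]\bigr) \;\equiv\; Q \wedge \exists v\,\bigl(x = E[x/v] \wedge P[x/v]\bigr).$$
This needs three syntactic facts: (i) substitution composition, $(Q[x/E])[x/v] \equiv Q[x/E[x/v]]$, which holds because the free occurrences of $x$ in $Q[x/E]$ are exactly those contributed by the copies of $E$; (ii) equality elimination—under the conjunct $x = E[x/v]$, whose right-hand side has no free $x$, we have $Q[x/E[x/v]] \equiv Q$; and (iii) since $v \notin \fv(Q)$, the conjunct $Q$ may be pulled out of $\exists v$. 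I expect (i)–(ii) to be the only genuinely delicate step. If the purely syntactic manipulation becomes awkward, it can be replaced by a short semantic argument over Kripke models—essentially the same reasoning already used in step (4) of the proof of \Cref{lem:sp-is-false}.
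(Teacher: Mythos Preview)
Your proposal is correct and follows essentially the same induction-on-$C$ structure as the paper's proof, with the assignment case handled via the same three substitution facts (composition, equality elimination under $x = E[x/v]$, and pulling $Q$ out of $\exists v$). The only cosmetic difference is in the conditional case: the paper works from $Q \wedge \sp(P, \ifstmt{E}{C_1}{C_2})$ toward the left-hand side using \Cref{lem:sp-if} and \Cref{lem:sp-disj}, whereas you unfold the left-hand side first and simplify each branch's precondition directly---both routes are equally valid and amount to the same computation.
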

\begin{proof}
  By induction on $C$:
  \begin{description}
    \item[$\kskip$:]
      $\sp(\wp(\kskip, Q) \wedge P, \kskip) \equiv Q \wedge P \equiv \\ Q \wedge \sp(P, \kskip)$

    \item[$x \coloneq E$:]
      Let $v \notin \fv(Q)$. Then,
      \begin{align*}
        &\sp(\wp(x \coloneq E, Q) \wedge P, x \coloneq E) \\
          &\quad \equiv \exists v (x = E[x/v] ~\wedge \\
          &\hspace{3.7em} (\wp(x \coloneq E, Q) \wedge P)[x/v]) &\text{defn $\sp$}\\
          &\quad \equiv \exists v (x = E[x/v] \wedge (Q[x/E] \wedge P)[x/v]) &\text{defn $\wp$} \\
          &\quad \equiv \exists v (x = E[x/v] \wedge Q[x/E[x/v]] \wedge P[x/v]) &\text{subst} \\
          &\quad \equiv \exists v (x = E[x/v] \wedge Q[x/x] \wedge P[x/v]) &\text{subst =} \\
          &\quad \equiv \exists v (x = E[x/v] \wedge Q \wedge P[x/v]) &\text{redundant} \\
          &\quad \equiv Q \wedge (\exists v\, x = E[x/v] \wedge P[x/v]) &v \notin \fv(Q) \\
          &\quad \equiv Q \wedge \sp(P, x \coloneq E) &\text{defn $\sp$}
      \end{align*}

    \item[$\ifstmt{E}{C_1}{C_2}$:]\leavevmode
      \begin{description}
        \item[(1)] $Q \wedge \sp(P \wedge E, C_1) \equiv \sp(\wp(C_1, Q) \wedge P \wedge E, C_1)$ by induction using $C_1$
        \item[(2)] $Q \wedge \sp(P \wedge \neg E, C_2) \equiv \sp(\wp(C_2, Q) \wedge P \wedge \neg E, C_2)$ by induction using $C_2$
        \item[(3)]
          \begin{align*}
            &Q \wedge \sp(P, \ifstmt{E}{C_1}{C_2}) \\
            &\quad \equiv Q \wedge (\sp(P \wedge E, C_1) \vee \sp(P \wedge \neg E, C_2)) &\text{defn $\sp$} \\
            &\quad \equiv (Q \wedge \sp(P \wedge E, C_1)) ~\vee \\
              &\hspace{2.35em} (Q \wedge \sp(P \wedge \neg E, C_2)) &\text{logic}\\
            &\quad \equiv \sp(\wp(C_1, Q) \wedge P \wedge E, C_1) ~\vee \\
            &\hspace{2.1em} \sp(\wp(C_2, Q) \wedge P \wedge \neg E, C_2) &\text{(1), (2)}\\
            &\quad \equiv \sp(\wp(C_1, Q) \wedge P \wedge E, \\
              &\hspace{3.6em} \ifstmt{E}{C_1}{C_2}) ~\vee \\
            &\hspace{2.25em} \sp(\wp(C_2, Q) \wedge P \wedge \neg E, \\
              &\hspace{3.6em} \ifstmt{E}{C_1}{C_2}) &\text{\Cref{lem:sp-if}}\\
            &\quad\equiv \sp(((\wp(C_1, Q) \wedge E) ~\vee \\
              &\hspace{4em} (\wp(C_2, Q) \wedge \neg E)) \wedge P, \\
            &\hspace{3.8em} \ifstmt{E}{C_1}{C_2}) &\text{\Cref{lem:sp-disj}}\\
            &\quad\equiv \sp(\wp(\ifstmt{E}{C_1}{C_2}, Q) \wedge P, \\
            &\hspace{3.5em} \ifstmt{E}{C_1}{C_2}) &\text{defn $\wp$}
          \end{align*}
      \end{description}

    \item[$C_1; C_2$:]\leavevmode
      \begin{description}
        \item[(1)] $\wp(C_2, Q) \wedge \sp(P, C_1) \equiv \sp(\wp(C_1, \\ \wp(C_2, Q)) \wedge P, C_1)$ by induction using $C_1$
        \item[(2)] $Q \wedge \sp(\sp(P, C_1), C_2) \equiv \sp(\wp(C_2, Q) \wedge  \\ \sp(P, C_1), C_2)$ by induction using $C_2$. Then,
      \end{description}
      \begin{align*}
        &Q \wedge \sp(P, C_1; C_2) \\
          &\quad\equiv Q \wedge \sp(\sp(P, C_1), C_2) &\text{defn $\sp$}\\
          &\quad\equiv \sp(\wp(C_2, Q) \wedge \sp(P, C_1), C_2) &\text{(2)} \\
          &\quad\equiv \sp(\sp(\wp(C_1, \wp(C_2, Q)) \wedge P, C_1), C_2) &\text{(1)} \\
          &\quad\equiv \sp(\sp(\wp(C_1; C_2, Q) \wedge P, C_1), C_2) &\text{defn $\wp$}\\
          &\quad\equiv \sp(\wp(C_1; C_2, Q) \wedge P, C_1; C_2) &\text{defn $\sp$} &\qedhere
      \end{align*}
  \end{description}
\end{proof}

\begin{lemma}[Fixpoint property]\label{lem:sp-wp-fixpoint}
  If $Q \Rightarrow \sp(\top, C)$ then \\ $Q \equiv \sp(\wp(C, Q), C)$.
\end{lemma}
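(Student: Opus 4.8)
The plan is to derive this immediately from the frame-style identity \Cref{lem:sp-wp-equiv}, instantiated at the trivial precondition $P \equiv \top$. Concretely, \Cref{lem:sp-wp-equiv} gives
\[
  \sp(\wp(C, Q) \wedge \top, C) \equiv Q \wedge \sp(\top, C),
\]
and since $\wp(C, Q) \wedge \top \equiv \wp(C, Q)$ by logic, the left-hand side is just $\sp(\wp(C, Q), C)$. So it remains only to show $Q \wedge \sp(\top, C) \equiv Q$.

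For that, I would use the hypothesis $Q \Rightarrow \sp(\top, C)$: one direction, $Q \wedge \sp(\top, C) \Rightarrow Q$, is trivial, and the other, $Q \Rightarrow Q \wedge \sp(\top, C)$, follows from the hypothesis together with $Q \Rightarrow Q$ (absorption). Chaining the two equivalences yields $\sp(\wp(C, Q), C) \equiv Q \wedge \sp(\top, C) \equiv Q$, which is the claim.

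There is essentially no obstacle here: the work has already been done in \Cref{lem:sp-wp-equiv}, and this lemma is the specialization that removes the extra conjunct $\sp(P, C)$ by taking $P \equiv \top$ and absorbing $\sp(\top, C)$ into $Q$. The only thing to be careful about is that the hypothesis $Q \Rightarrow \sp(\top, C)$ is exactly what is needed to justify that absorption step — it cannot be dropped, since in general $\sp(\wp(C,Q),C)$ only recovers the part of $Q$ that is actually reachable by running $C$ from some state, i.e. $Q \wedge \sp(\top,C)$.
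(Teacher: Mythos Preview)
The proposal is correct and matches the paper's proof exactly: the paper also derives the result immediately from \Cref{lem:sp-wp-equiv} (specialized to $P \equiv \top$), noting that the hypothesis gives $Q \equiv Q \wedge \sp(\top, C)$. Your write-up just spells out the absorption step in more detail.
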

\begin{proof}
  Immediate from \cref{lem:sp-wp-equiv}, noting that $Q \equiv Q \wedge \sp(\top, C)$ in this case.
\end{proof}

\begin{lemma}\label{lem:sp-wp-fixpoint-p}
  If $Q \Rightarrow \sp(P, C)$ then $Q \equiv \sp(P \wedge \wp(C, Q), C)$.
\end{lemma}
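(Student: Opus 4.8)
The plan is to derive this as an immediate corollary of \cref{lem:sp-wp-equiv}, exactly mirroring how \cref{lem:sp-wp-fixpoint} specializes the case $P \equiv \top$. The key observation is that $\wedge$ is commutative, so $\sp(P \wedge \wp(C, Q), C) \equiv \sp(\wp(C, Q) \wedge P, C)$, and then \cref{lem:sp-wp-equiv} rewrites the right-hand side as $Q \wedge \sp(P, C)$. It remains only to collapse $Q \wedge \sp(P, C)$ to $Q$, which is where the hypothesis enters.

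Concretely, the steps in order: (1) invoke \cref{lem:sp-wp-equiv} to get $\sp(P \wedge \wp(C, Q), C) \equiv Q \wedge \sp(P, C)$, using commutativity of conjunction to line up the syntactic form; (2) from the hypothesis $Q \Rightarrow \sp(P, C)$, conclude by propositional logic that $Q \wedge \sp(P, C) \equiv Q$ (the forward direction is the hypothesis together with $Q \Rightarrow Q$; the backward direction is trivial); (3) chain the two equivalences to obtain $\sp(P \wedge \wp(C, Q), C) \equiv Q$, using transitivity of $\equiv$ (justified by the definition of $\equiv$ in \cref{app:grammar}).

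There is essentially no obstacle here — the lemma is a one-line consequence of \cref{lem:sp-wp-equiv} — so the only thing to be careful about is bookkeeping: making sure the conjunction is written in the order that matches the statement of \cref{lem:sp-wp-equiv}, and citing the earlier lemma rather than re-deriving it. If one wanted to be fully self-contained one could instead redo the induction on $C$ as in \cref{lem:sp-wp-equiv}, but that would be redundant given the available result.

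\begin{proof}
  We have
  \begin{align*}
    \sp(P \wedge \wp(C, Q), C) &\equiv \sp(\wp(C, Q) \wedge P, C) &\text{logic} \\
      &\equiv Q \wedge \sp(P, C) &\text{\Cref{lem:sp-wp-equiv}} \\
      &\equiv Q &\text{$Q \Rightarrow \sp(P, C)$}
  \end{align*}
  where the last step uses that $Q \Rightarrow \sp(P, C)$ implies $Q \wedge \sp(P, C) \equiv Q$.
\end{proof}
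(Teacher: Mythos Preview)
Your proof is correct and matches the paper's approach exactly: both invoke \cref{lem:sp-wp-equiv} to rewrite $\sp(P \wedge \wp(C, Q), C)$ as $Q \wedge \sp(P, C)$ and then collapse the conjunction using the hypothesis $Q \Rightarrow \sp(P, C)$. The paper runs the equational chain in the opposite direction (starting from $Q$) and leaves the commutativity of $\wedge$ implicit, but otherwise the argument is identical.
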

\begin{proof}
  \begin{align*}
    Q &\equiv Q \wedge \sp(P, C) &Q \Rightarrow \sp(P, C) \\
      &\equiv \sp(P \wedge \wp(C, Q), C) &\text{\Cref{lem:sp-wp-equiv}} &\qedhere
  \end{align*}
\end{proof}

\section{Hoare logic}\label{app:hoare}

Hoare triples are denoted by $\htrip{P}{C}{Q}$. Deductions in Hoare logic are characterized by the following rules:

\begin{mathpar}
  \inferrule[OX-Skip]
    { }
    { \htrip{P}{\ttt{skip}}{P} } \and
  \inferrule[OX-Assign]
    { }
    { \htrip{P[x/E]}{x \coloneq E}{P} } \and
  \inferrule[OX-Seq]
    { \htrip{P}{C_1}{R} \\\\ \htrip{R}{C_2}{Q} }
    { \htrip{P}{C_1; C_2}{Q} } \and
  \inferrule[OX-If]
    { \htrip{E \wedge P}{C_1}{Q} \\\\ \htrip{\neg E \wedge P}{C_2}{Q} }
    { \htrip{P}{\ttt{if}~ E ~\ttt{then}~ C_1 ~\ttt{else}~ C_2}{Q} } \and
  \inferrule[OX-Cons]
    { P \Rightarrow P' \\ \htrip{P'}{C}{Q'} \\ Q' \Rightarrow Q }
    { \htrip{P}{C}{Q} }
\end{mathpar}

\subsection{Weakest preconditions}\label{app:hoare-wp}

\begin{lemma}\label{lem:hoare-imply-wp}
  If $\htrip{P}{C}{Q}$ then $P \Rightarrow \wp(C, Q)$.
\end{lemma}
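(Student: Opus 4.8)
The plan is to proceed by induction on the derivation of $\htrip{P}{C}{Q}$, with one case per Hoare rule. The two base cases are immediate from the explicit formulas for $\wp$: \textsc{OX-Skip} derives $\htrip{P}{\kskip}{P}$, and $\wp(\kskip, P) = P$, so the required implication is reflexivity; \textsc{OX-Assign} derives $\htrip{P[x/E]}{x \coloneq E}{P}$, and $\wp(x \coloneq E, P) = P[x/E]$, so again we need only reflexivity.

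For \textsc{OX-Seq}, the induction hypotheses give $P \Rightarrow \wp(C_1, R)$ and $R \Rightarrow \wp(C_2, Q)$; applying \cref{lem:wp-strengthen-post} to the latter yields $\wp(C_1, R) \Rightarrow \wp(C_1, \wp(C_2, Q))$, and chaining these with the definition $\wp(C_1; C_2, Q) = \wp(C_1, \wp(C_2, Q))$ closes the case. For \textsc{OX-If}, the induction hypotheses give $E \wedge P \Rightarrow \wp(C_1, Q)$ and $\neg E \wedge P \Rightarrow \wp(C_2, Q)$; reasoning by cases on whether a state satisfying $P$ also satisfies $E$, such a state satisfies either $\wp(C_1, Q) \wedge E$ or $\wp(C_2, Q) \wedge \neg E$, hence the disjunction, which is exactly $\wp(\ifstmt{E}{C_1}{C_2}, Q)$. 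For \textsc{OX-Cons}, the induction hypothesis gives $P' \Rightarrow \wp(C, Q')$, and \cref{lem:wp-strengthen-post} applied to $Q' \Rightarrow Q$ gives $\wp(C, Q') \Rightarrow \wp(C, Q)$; composing with the premise $P \Rightarrow P'$ finishes the case.

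The only point requiring care is that \textsc{OX-Cons} can sit above a command of any shape, so the induction must be over the derivation rather than over the structure of $C$; once that choice is made, every case is routine and the sole external ingredient is \cref{lem:wp-strengthen-post} (monotonicity of $\wp$ in its postcondition). In a richer language one would additionally need a case for loops, but that is out of scope here.
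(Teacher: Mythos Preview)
Your proposal is correct and essentially identical to the paper's proof: both proceed by induction on the derivation, handle each Hoare rule the same way, and rely on \cref{lem:wp-strengthen-post} in exactly the \textsc{OX-Seq} and \textsc{OX-Cons} cases. Your remark that the induction must be on the derivation (because of \textsc{OX-Cons}) is apt and matches the paper's structure.
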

\begin{proof}
  By induction on the derivation of $\htrip{P}{C}{Q}$:

  \begin{description}
    \item[OX-Skip:]\leavevmode
      \begin{description}
        \item[(1)] $P \equiv Q$ by inversion
        \item[(2)] $P \equiv Q \equiv \wp(\kskip, Q)$ by definition
      \end{description}
    \item[OX-Assign:]\leavevmode
      \begin{description}
        \item[(1)] $P \equiv Q[E/x]$ by inversion
        \item[(2)] $P \equiv Q[E/x] \equiv \wp(x \coloneq E, Q)$ by definition
      \end{description}
    \item[OX-Seq:]\leavevmode
      \begin{description}
        \item[(1)] $\htrip{P}{C_1}{R}$ for some $R$ by inversion
        \item[(2)] $P \Rightarrow \sp(C_1, R)$ by induction using (1)
        \item[(3)] $\htrip{R}{C_2}{Q}$ by inversion
        \item[(4)] $R \Rightarrow \wp(C_2, Q)$ by induction using (3)
        \item[(5)] $\wp(C_1, R) \Rightarrow \wp(C_1, \wp(C_2, Q))$ by \cref{lem:wp-strengthen-post} using (4)
        \item[(6)] By (4), (5), and definition of $\wp$,
          \begin{align*}
            P &\Rightarrow \wp(C_1, R) \\
              &\Rightarrow \wp(C_1, \wp(C_2, Q)) \\
              &\equiv \wp(C_1; C_2, Q)
          \end{align*}
      \end{description}
    \item[OX-If:]\leavevmode
      \begin{description}
        \item[(1)] $\htrip{E \wedge P}{C_1}{Q}$ by inversion
        \item[(2)] $E \wedge P \Rightarrow \wp(C_1, Q)$ by induction using (1)
        \item[(3)] $\htrip{\neg E \wedge P}{C_2}{Q}$ by inversion
        \item[(4)] $\neg E \wedge P \Rightarrow \wp(C_2, Q)$ by induction using (3)
        \item[(5)]
          \begin{align*}
            P &\equiv (P \wedge E) \vee (P \wedge \neg E) &\text{logic} \\
              &\Rightarrow (\wp(C_1, Q) \wedge E) \vee (\wp(C_2, Q) \wedge \neg E) &\text{(2) and (4)} \\
              &\Rightarrow \wp(\ifstmt{E}{C_1}{C_2}, Q) &\text{defn $\wp$}
          \end{align*}
      \end{description}

    \item[OX-Cons:]\leavevmode
      \begin{description}
        \item[(1)] $P \Rightarrow P'$ for some $P'$ by inversion
        \item[(2)] $Q' \Rightarrow Q$ for some $Q'$ by inversion
        \item[(3)] $\htrip{P'}{C}{Q'}$ by inversion
        \item[(4)] $P' \Rightarrow \wp(C, Q')$ by induction using (3)
        \item[(5)] $\wp(C, Q') \Rightarrow \wp(C, Q)$ by \cref{lem:wp-strengthen-post} using (4)
        \item[(6)] $P \Rightarrow P' \Rightarrow \wp(C, Q') \Rightarrow \wp(C, Q)$ by (1), (4), and (5) \qedhere
      \end{description}
  \end{description}
\end{proof}

\begin{lemma}\label{lem:wp-provable}
  $\htrip{\wp(C, Q)}{C}{Q}$
\end{lemma}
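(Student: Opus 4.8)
The plan is to prove this by structural induction on $C$, in each case constructing an explicit Hoare-logic derivation whose conclusion is $\htrip{\wp(C, Q)}{C}{Q}$, using the explicit calculation of $\wp$ from \Cref{app:wp} together with the axioms and rules of \Cref{app:hoare}. No auxiliary lemmas beyond the $\wp$ calculation are needed; in particular this is the converse direction to \Cref{lem:hoare-imply-wp}.

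For the base cases the axioms line up directly with the $\wp$ calculation. When $C = \kskip$, we have $\wp(\kskip, Q) \equiv Q$, so \textsc{OX-Skip} gives $\htrip{Q}{\kskip}{Q}$. When $C = x \coloneq E$, we have $\wp(x \coloneq E, Q) \equiv Q[x/E]$, so \textsc{OX-Assign} instantiated with $P := Q$ gives $\htrip{Q[x/E]}{x \coloneq E}{Q}$.

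For $C = C_1; C_2$, recall $\wp(C_1; C_2, Q) \equiv \wp(C_1, \wp(C_2, Q))$. The induction hypothesis applied to $C_2$ with postcondition $Q$ gives $\htrip{\wp(C_2, Q)}{C_2}{Q}$, and applied to $C_1$ with postcondition $\wp(C_2, Q)$ gives $\htrip{\wp(C_1, \wp(C_2, Q))}{C_1}{\wp(C_2, Q)}$; then \textsc{OX-Seq} with $R := \wp(C_2, Q)$ yields the claim. For $C = \ifstmt{E}{C_1}{C_2}$, write $W$ for $\wp(C, Q) \equiv (\wp(C_1, Q) \wedge E) \vee (\wp(C_2, Q) \wedge \neg E)$. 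By propositional reasoning (using $E \wedge \neg E \equiv \bot$) we have $E \wedge W \equiv E \wedge \wp(C_1, Q)$, hence $E \wedge W \Rightarrow \wp(C_1, Q)$, and symmetrically $\neg E \wedge W \Rightarrow \wp(C_2, Q)$. The induction hypotheses give $\htrip{\wp(C_1, Q)}{C_1}{Q}$ and $\htrip{\wp(C_2, Q)}{C_2}{Q}$, so \textsc{OX-Cons} (strengthening only the precondition) gives $\htrip{E \wedge W}{C_1}{Q}$ and $\htrip{\neg E \wedge W}{C_2}{Q}$; finally \textsc{OX-If} assembles these into $\htrip{W}{C}{Q}$.

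The only mildly delicate step is the conditional case, where \textsc{OX-If} forces the per-branch preconditions to be $E \wedge W$ and $\neg E \wedge W$ rather than the $\wp(C_i, Q)$ provided by the induction hypotheses; this mismatch is bridged by \textsc{OX-Cons} together with the elementary equivalences noted above. Every other case is a direct match between a $\wp$ clause and the corresponding Hoare rule, so I do not anticipate any real obstacle.
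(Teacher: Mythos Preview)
Your proposal is correct and follows essentially the same approach as the paper's proof: structural induction on $C$, with the base cases handled directly by \textsc{OX-Skip} and \textsc{OX-Assign}, sequencing by two inductive appeals plus \textsc{OX-Seq}, and the conditional by the inductive hypotheses, \textsc{OX-Cons} to strengthen the branch preconditions to $E \wedge W$ and $\neg E \wedge W$, and finally \textsc{OX-If}. The only cosmetic difference is that you note the equivalence $E \wedge W \equiv E \wedge \wp(C_1, Q)$ whereas the paper records only the needed implication $W \wedge E \Rightarrow \wp(C_1, Q)$.
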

\begin{proof}
  By induction on $C$:

  \begin{description}
    \item[$\kskip$:] $\wp(\kskip, Q) \equiv Q$ by definition; $\etrip{Q}{\kskip}{Q}$ by \textsc{OX-Skip}.
    \item[$x \coloneq E$:]\leavevmode
      \begin{description}
        \item[(1)] $\wp(x \coloneq E, Q) \equiv Q[x/E]$ by definition
        \item[(2)] $\htrip{Q[x/E]}{x \coloneq E}{Q}$ by \textsc{OX-Assign} using (1)
      \end{description}
    \item[$C_1; C_2$:]\leavevmode
      \begin{description}
        \item[(1)] $\htrip{\wp(C_1, \wp(C_2, Q))}{C_1}{\wp(C_2, Q)}$ by induction
        \item[(2)] $\htrip{\wp(C_2, Q)}{C_2}{Q}$ by induction
        \item[(3)] $\htrip{\wp(C_1, \wp(C_2, Q))}{C_1; C_2}{Q}$ by \textsc{OX-Seq} using (1) and (2)
        \item[(4)] $\wp(C_1; C_2, Q) \equiv \wp(C_1, \wp(C_2, Q))$ by definition
        \item[(5)] $\htrip{\wp(C_1; C_2, Q)}{C_1; C_2}{Q}$ by (3) and (4)
      \end{description}
    \item[$\ifstmt{E}{C_1}{C_2}$:]\leavevmode
      \begin{description}
        \item[(1)] Let $P \equiv (\wp(C_1, Q) \wedge E) \vee (\wp(C_2, Q) \wedge \neg E)$
        \item[(2)] $\htrip{\wp(C_1, Q)}{C_1}{Q}$ by induction
        \item[(3)] $P \wedge E \Rightarrow \wp(C_1, Q)$ by logic
        \item[(4)] $\htrip{P \wedge E}{C_1}{Q}$ by \textsc{OX-Cons} using (2) and (3)
        \item[(5)] $\htrip{\wp(C_2, Q)}{C_2}{Q}$ by induction
        \item[(6)] $P \wedge \neg E \Rightarrow \wp(C_2, Q)$ by logic
        \item[(7)] $\htrip{P \wedge \neg E}{C_2}{Q}$ by \textsc{OX-Cons} using (5) and (6)
        \item[(8)] $\htrip{P}{\ifstmt{E}{C_1}{C_2}}{Q}$ by \textsc{OX-If} using (4) and (7)
        \item[(9)] $P \equiv \wp(\ifstmt{E}{C_1}{C_2}, Q)$ by (1) and definition of $\wp$
        \item[(10)] $\vdash \{ \wp(\ifstmt{E}{C_1}{C_2}) \} \\ \ifstmt{E}{C_1}{C_2}\ \{Q\}$ by (8) and (9) \qedhere
      \end{description}
  \end{description}
\end{proof}

\begin{lemma}\label{lem:wp-imply-hoare}
  If $P \Rightarrow \wp(C, Q)$ then $\htrip{P}{C}{Q}$.
\end{lemma}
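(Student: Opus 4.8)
This is an immediate consequence of \cref{lem:wp-provable} together with the \textsc{OX-Cons} rule, so there is no real obstacle; the only ``work'' is plugging the pieces together. First I would invoke \cref{lem:wp-provable} to obtain the derivation $\htrip{\wp(C, Q)}{C}{Q}$. Then I would apply \textsc{OX-Cons} with $P' \coloneq \wp(C, Q)$ and $Q' \coloneq Q$: its three premises are $P \Rightarrow \wp(C, Q)$ (the hypothesis of the lemma), $\htrip{\wp(C, Q)}{C}{Q}$ (just obtained), and $Q \Rightarrow Q$ (reflexivity of implication). The conclusion of the rule is exactly $\htrip{P}{C}{Q}$.

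\textbf{Remark.} Unlike the other lemmas in this subsection, no induction on $C$ or on the derivation is needed here, since \cref{lem:wp-provable} already did the inductive work of building the canonical derivation from the weakest precondition, and \textsc{OX-Cons} is precisely the rule that lets us strengthen that precondition to any $P$ implying it. Combined with \cref{lem:hoare-imply-wp}, this yields the characterization $\htrip{P}{C}{Q} \iff P \Rightarrow \wp(C, Q)$ referenced elsewhere in the paper.
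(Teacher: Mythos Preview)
Your proposal is correct and matches the paper's proof exactly: invoke \cref{lem:wp-provable} to obtain $\htrip{\wp(C, Q)}{C}{Q}$, then apply \textsc{OX-Cons} using the hypothesis $P \Rightarrow \wp(C, Q)$.
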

\begin{proof}
  By \cref{lem:wp-provable}, $\htrip{\wp(C, Q)}{C}{Q}$, thus by \textsc{OX-Cons} $\htrip{P}{C}{Q}$.
\end{proof}

\begin{theorem}\label{thm:wp-iff-hoare}
  $P \Rightarrow \wp(C, Q) \iff \htrip{P}{C}{Q}$.
\end{theorem}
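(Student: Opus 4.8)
This is an immediate corollary of the two lemmas already established: \cref{lem:hoare-imply-wp} gives the left-to-right direction (if $\htrip{P}{C}{Q}$ then $P \Rightarrow \wp(C,Q)$), and \cref{lem:wp-imply-hoare} gives the right-to-left direction (if $P \Rightarrow \wp(C,Q)$ then $\htrip{P}{C}{Q}$). So the plan is simply to invoke these two lemmas and conclude the biconditional; there is no real content left to supply and no obstacle to speak of.

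The only thing worth noting is \emph{why} the two lemmas combine cleanly. The hard work has been split so that \cref{lem:hoare-imply-wp} is proved by induction on the derivation of the Hoare triple (handling each of the five rules, with \textsc{OX-Cons} using monotonicity of $\wp$ from \cref{lem:wp-strengthen-post}), while \cref{lem:wp-imply-hoare} is proved by first showing $\htrip{\wp(C,Q)}{C}{Q}$ by induction on $C$ (\cref{lem:wp-provable}) and then applying \textsc{OX-Cons}. Since each direction is fully discharged, the theorem follows with a one-line argument.

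\begin{proof}
  ($\Rightarrow$) Immediate from \cref{lem:wp-imply-hoare}.
  ($\Leftarrow$) Immediate from \cref{lem:hoare-imply-wp}.
\end{proof}
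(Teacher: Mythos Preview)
Your proof is correct and matches the paper's own proof exactly: both directions are discharged by invoking \cref{lem:wp-imply-hoare} and \cref{lem:hoare-imply-wp}, respectively. (One cosmetic note: in your prose you swap the labels ``left-to-right'' and ``right-to-left''---the theorem places $P \Rightarrow \wp(C,Q)$ on the left---but the proof block itself has the directions right.)
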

\begin{proof}
  $\Longrightarrow$: \Cref{lem:wp-imply-hoare}; $\Longleftarrow$: \Cref{lem:hoare-imply-wp}.
\end{proof}

\subsection{Strongest postconditions}\label{app:hoare-sp}

\begin{lemma}\label{lem:hoare-imply-sp}
  If $\htrip{P}{C}{Q}$ then $\sp(P, C) \Rightarrow Q$.
\end{lemma}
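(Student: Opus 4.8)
The statement is the strongest-postcondition analogue of \cref{lem:hoare-imply-wp}, and there are two natural routes. The slick route is to reuse the weakest-precondition version together with the ``commuting'' identity \cref{lem:sp-wp-equiv}. Concretely: from $\htrip{P}{C}{Q}$, \cref{lem:hoare-imply-wp} gives $P \Rightarrow \wp(C, Q)$, hence $P \equiv P \wedge \wp(C, Q)$. Since identity of propositions coincides with equivalence, $\sp$ respects this, so $\sp(P, C) \equiv \sp(\wp(C, Q) \wedge P, C)$, and then \cref{lem:sp-wp-equiv} rewrites the right-hand side as $Q \wedge \sp(P, C)$, which implies $Q$. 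This is essentially a one-line argument and I would present it as the main proof.

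Alternatively, to keep the development parallel to \cref{lem:hoare-imply-wp}, one can induct on the derivation of $\htrip{P}{C}{Q}$. The cases \textsc{OX-Skip} ($\sp(P,\kskip)\equiv P$) and \textsc{OX-Cons} (use \cref{lem:sp-strengthen} on the strengthened precondition, then the inductive hypothesis, then transitivity with $Q'\Rightarrow Q$) are immediate. For \textsc{OX-Seq}, apply the inductive hypothesis to the first premise to get $\sp(P,C_1)\Rightarrow R$, push this through $C_2$ with \cref{lem:sp-strengthen} to get $\sp(\sp(P,C_1),C_2)\Rightarrow\sp(R,C_2)$, and finish with the inductive hypothesis on the second premise. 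For \textsc{OX-If}, the inductive hypotheses give $\sp(P\wedge E,C_1)\Rightarrow Q$ and $\sp(P\wedge\neg E,C_2)\Rightarrow Q$, and the definition of $\sp$ on $\ifstmt{E}{C_1}{C_2}$ (optionally combined with \cref{lem:sp-disj}/\cref{lem:sp-if}) expresses $\sp(P,\ifstmt{E}{C_1}{C_2})$ as the disjunction of those two, which then implies $Q$.

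The one case that needs actual work in the inductive route is \textsc{OX-Assign}: we must show $\sp(P[x/E],\,x\coloneq E)\Rightarrow P$, i.e.\ $\exists v\,(x = E[x/v] \wedge P[x/E][x/v]) \Rightarrow P$. This requires the substitution fact $P[x/E][x/v] \equiv P[x/E[x/v]]$ and the observation that under the conjunct $x = E[x/v]$ this is just $P$; a semantic argument in the style of step (4) of \cref{lem:sp-is-false} discharges it cleanly. Note, however, that even this case collapses to the slick argument: \cref{lem:sp-wp-equiv} with $P \coloneq \top$ gives $\sp(P[x/E],\,x\coloneq E)\equiv\sp(\wp(x\coloneq E,P)\wedge\top,\,x\coloneq E)\equiv P\wedge\sp(\top,\,x\coloneq E)\Rightarrow P$. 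So the assignment case is the only potential obstacle, and it is entirely sidestepped by taking the first route; I would therefore go with the \cref{lem:hoare-imply-wp}-plus-\cref{lem:sp-wp-equiv} proof.
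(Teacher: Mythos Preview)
Your proposal is correct. The paper takes the second of your two routes: a direct induction on the derivation of $\htrip{P}{C}{Q}$, with the \textsc{OX-Assign} case handled by the substitution manipulation you describe (rewriting $Q[x/E][x/v]$ as $Q[x/E[x/v]]$, then using the equality $x = E[x/v]$ to collapse it to $Q$, and finally dropping the existential since $v \notin \fv(Q)$). Your preferred ``slick route'' via \cref{lem:hoare-imply-wp} and \cref{lem:sp-wp-equiv} is genuinely different and is not circular in this paper's development, since both of those lemmas are proved independently of \cref{lem:hoare-imply-sp}. The trade-off is that your argument is a two-line derivation but leans on \cref{lem:sp-wp-equiv}, which is itself an induction on $C$ of comparable weight to the direct proof; the paper's route is more self-contained and keeps the $\sp$ and $\wp$ tracks parallel and independent until they are explicitly tied together in \S\ref{app:fixpoint}. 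Either is fine; your version arguably makes the $\wp$/$\sp$ duality do more work, while the paper's version keeps \cref{lem:hoare-imply-sp} a standalone counterpart to \cref{lem:hoare-imply-wp}.
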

\begin{proof}
  By induction on $\htrip{P}{C}{Q}$:

  \begin{description}
    \item[OX-Skip:] Then $P \equiv Q$ and $C \equiv \kskip$, thus \\ $\sp(P, \kskip) \equiv P \equiv Q$.
    
    \item[OX-Assign:] Then $P \equiv Q[x/E]$ and $C \equiv x \coloneq E$. Assuming $v \notin \fv(Q)$:
    \begin{align*}
      &\sp(Q[x/E], x \coloneq E) \\
        &\quad \equiv \exists v (x = E[x/v] \wedge Q[x/E][x/v]) &\text{defn} \\
        &\quad \equiv \exists v (x = E[x/v] \wedge Q[x/E[x/v])) &\text{substitution}\\
        &\quad \equiv \exists v (x = E[x/v] \wedge Q[x/x]) &\text{subst.\ equality} \\
        &\quad \equiv \exists v (x = E[x/v] \wedge Q) &\text{redundant}\\
        &\quad \Rightarrow Q &v \notin \fv(Q)
    \end{align*}

    \item[OX-Seq:]\leavevmode
      \begin{description}
        \item[(1)] $C \equiv C_1; C_2$ for some $C_1, C_2$ by inversion
        \item[(2)] $\htrip{P}{C_1}{R}$ for some $R$ by inversion
        \item[(3)] $\htrip{R}{C_2}{Q}$ by inversion
        \item[(4)] $\sp(P, C_1) \Rightarrow R$ by induction using (2)
        \item[(5)] $\sp(\sp(P, C_1), C_2) \Rightarrow \sp(R, C_2)$ by \cref{lem:sp-strengthen} using (4)
        \item[(6)] $\sp(R, C_2) \Rightarrow Q$ by induction using (3)
        \item[(7)] $\sp(\sp(P, C_1), C_2) \Rightarrow Q$ by (5) and (6)
        \item[(8)] $\sp(P, C) \equiv \sp(P, C_1; C_2) \Rightarrow Q$ by (1), (7), and definition of $\sp$
      \end{description}

    \item[OX-If:]\leavevmode
    \begin{description}
      \item[(1)] $C \equiv \ifstmt{E}{C_1}{C_2}$ for some $E, C_1, C_2$ by inversion
      \item[(2)] $\htrip{E \wedge P}{C_1}{Q}$ by inversion
      \item[(3)] $\htrip{\neg E \wedge P}{C_2}{Q}$ by inversion
      \item[(4)] $\sp(E \wedge P, C_1) \Rightarrow Q$ by induction using (2)
      \item[(5)] $\sp(\neg E \wedge P, C_2) \Rightarrow Q$ by induction using (3)
      \item[(6)] $\sp(E \wedge P, C_1) \vee \sp(\neg E \wedge P, C_2) \Rightarrow Q$ by logic using (4) and (5)
      \item[(7)] $\sp(P, \ifstmt{E}{C_1}{C_2}) \Rightarrow Q$ by (6) and definition of $\sp$
    \end{description}

    \item[OX-Cons:]\leavevmode
      \begin{description}
        \item[(1)] $P \Rightarrow P'$ for some $P'$ by inversion
        \item[(2)] $Q' \Rightarrow Q$ for some $Q'$ by inversion
        \item[(3)] $\htrip{P'}{C}{Q'}$ by inversion
        \item[(4)] $\sp(P, C) \Rightarrow \sp(P', C)$ by \cref{lem:sp-strengthen} and (1)
        \item[(5)] $\sp(P', C) \Rightarrow Q'$ by induction using (3)
        \item[(6)] $\sp(P, C) \Rightarrow Q$ by (4), (5), and (2) \qedhere
      \end{description}
  \end{description}
\end{proof}

\begin{lemma}\label{lem:sp-provable}
  $\htrip{P}{C}{\sp(P, C)}$
\end{lemma}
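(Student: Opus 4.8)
The plan is to prove $\htrip{P}{C}{\sp(P,C)}$ by induction on the structure of $C$, closely paralleling the proof of \Cref{lem:wp-provable} but driven by the explicit $\sp$ equations of \Cref{app:sp} and using \textsc{OX-Cons} wherever the postcondition produced by the rules does not literally match $\sp(P,C)$.

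The three easy cases go as follows. For $\kskip$, $\sp(P,\kskip)\equiv P$ and \textsc{OX-Skip} applies directly. For $C_1;C_2$, the induction hypotheses give $\htrip{P}{C_1}{\sp(P,C_1)}$ and $\htrip{\sp(P,C_1)}{C_2}{\sp(\sp(P,C_1),C_2)}$, so \textsc{OX-Seq} together with $\sp(P,C_1;C_2)\equiv\sp(\sp(P,C_1),C_2)$ closes the case. For $\ifstmt{E}{C_1}{C_2}$, the induction hypotheses give $\htrip{E\wedge P}{C_1}{\sp(E\wedge P,C_1)}$ and $\htrip{\neg E\wedge P}{C_2}{\sp(\neg E\wedge P,C_2)}$; since $\sp(P,\ifstmt{E}{C_1}{C_2})\equiv\sp(P\wedge E,C_1)\vee\sp(P\wedge\neg E,C_2)$ and each disjunct implies the whole disjunction, \textsc{OX-Cons} weakens both triples so that their postcondition becomes $\sp(P,\ifstmt{E}{C_1}{C_2})$, and then \textsc{OX-If} assembles the conclusion.

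The only case that needs care is $x\coloneq E$, and this is the main (minor) obstacle. The rule \textsc{OX-Assign} only yields triples of the shape $\htrip{Q[x/E]}{x\coloneq E}{Q}$, so I would set $Q\coloneq\sp(P,x\coloneq E)\equiv\exists v(x=E[x/v]\wedge P[x/v])$ with the bound variable $v$ chosen fresh (in particular $v\notin\fv(P)\cup\fv(E)$, which is legitimate by $\alpha$-renaming), instantiate \textsc{OX-Assign} to get $\htrip{Q[x/E]}{x\coloneq E}{Q}$, and then finish with \textsc{OX-Cons} once $P\Rightarrow Q[x/E]$ is established. Because $v$ is fresh for $E$, no capture occurs and $Q[x/E]$ is just $\exists v(E=E[x/v]\wedge P[x/v])$; this existential is witnessed by instantiating $v$ with $x$, since then the instance is $E=E[x/v][v/x]\wedge P[x/v][v/x]$, which reduces to $E=E\wedge P\equiv P$ using $E[x/v][v/x]=E$ and $P[x/v][v/x]=P$. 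Hence $P\Rightarrow Q[x/E]$, and \textsc{OX-Cons} gives $\htrip{P}{x\coloneq E}{\sp(P,x\coloneq E)}$. (If one prefers to avoid syntactic substitution juggling, this implication can instead be discharged by a short semantic argument in the FOL model, as in the proof of \Cref{lem:sp-is-false}.) Everything outside the assignment case is purely mechanical rewriting with the $\sp$ equations and the consequence rule.
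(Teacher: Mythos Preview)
Your proposal is correct and follows essentially the same route as the paper's proof: induction on $C$, with \textsc{OX-Skip}, \textsc{OX-Seq}, and \textsc{OX-If} (after weakening each branch's postcondition to the disjunction via \textsc{OX-Cons}) handling the structural cases, and the assignment case discharged by instantiating \textsc{OX-Assign} at $Q\coloneq\sp(P,x\coloneq E)$ and then using \textsc{OX-Cons} with the implication $P\Rightarrow Q[x/E]$ witnessed by $v\coloneq x$. Your extra care in choosing $v\notin\fv(E)$ to avoid capture when forming $Q[x/E]$ is a reasonable refinement of the paper's slightly looser side condition.
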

\begin{proof}
  By induction on $C$:

  \begin{description}
    \item[$\kskip$:]\leavevmode
      \begin{description}
        \item[(1)] $\htrip{P}{\kskip}{P}$ by \textsc{OX-Skip}
        \item[(2)] $P \equiv \sp(P, \kskip)$ by \textsc{OX-Skip}
        \item[(3)] $\htrip{P}{\kskip}{\sp(P, \kskip)}$ by \textsc{OX-Cons} using (1) and (2)
      \end{description}

    \item[$x \coloneq E$:]\leavevmode
      \begin{description}
        \item[(1)] Let $Q \equiv \exists v (x = E[x/v] \wedge P[x/v])$ where $v \notin \fv(P)$.
        \item[(2)] $\htrip{Q[E/x]}{x \coloneq E}{Q}$ by \textsc{OX-Assign}
        \item[(3)] $P \Rightarrow \exists v (E = E[x/v] \wedge P[x/v]) \equiv Q[x/E]$ by logic (witnessed by letting $v = x$).
        \item[(4)] $\htrip{P}{x \coloneq E}{Q}$ by \textsc{OX-Cons} using (2) and (3)
        \item[(5)] $\htrip{P}{x \coloneq E}{\sp(P, x \coloneq E)}$ by (1), (4), and definition of $\sp$
      \end{description}

    \item[$C_1; C_2$:]\leavevmode
    \begin{description}
      \item[(1)] $\htrip{P}{C_1}{\sp(P, C_1)}$ by induction
      \item[(2)] $\htrip{\sp(P, C_1)}{C_2}{\sp(\sp(P, C_1), C_2)}$ by induction
      \item[(3)] $\htrip{P}{C_1; C_2}{\sp(\sp(P, C_1), C_2)}$ by \textsc{OX-Seq} using (1) and (2)
      \item[(4)] $\sp(P, C_1; C_2) \equiv \sp(\sp(P, C_1), C_2)$ by definition
      \item[(5)] $\htrip{P}{C_1; C_2}{\sp(P, C_1; C_2)}$ by (3) and (4)
    \end{description}

    \item[$\ifstmt{E}{C_1}{C_2}$:]\leavevmode
    \begin{description}
      \item[(1)] $\htrip{P \wedge E}{C_1}{\sp(P \wedge E, C_1)}$ by induction
      \item[(2)] $\htrip{P \wedge \neg E}{C_2}{\sp(P \wedge \neg E, C_2)}$ by induction
      \item[(3)] $\htrip{P \wedge E}{C_1}{\sp(P \wedge E, C_1) \vee \sp(P \wedge \neg E, C_2)}$ by \textsc{OX-Cons} using (1)
      \item[(4)] $\htrip{P \wedge \neg E}{C_2}{\sp(P \wedge E, C_1) \vee \sp(P \wedge \neg E, C_2)}$ by \textsc{OX-Cons} using (2)
      \item[(5)] $\htrip{P}{\ifstmt{E}{C_1}{C_2}}{\sp(P \wedge E, C_1) \vee \sp(P \wedge \neg E, C_2)}$ by \textsc{OX-If} using (3) and (4)
      \item[(6)] $\vdash \{P\} ~ \ifstmt{E}{C_1}{C_2} \\ \{ \sp(P, \ifstmt{E}{C_1}{C_2}) \}$ by (5) and definition of $\sp$ \qedhere
    \end{description}
  \end{description}
\end{proof}

\begin{lemma}\label{lem:sp-imply-hoare}
  If $\sp(P, C) \Rightarrow Q$ then $\htrip{P}{C}{Q}$.
\end{lemma}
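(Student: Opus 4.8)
The plan is to derive this as an immediate corollary of \Cref{lem:sp-provable}, mirroring exactly the structure used for \Cref{lem:wp-imply-hoare}. First I would invoke \Cref{lem:sp-provable} to obtain the derivation $\htrip{P}{C}{\sp(P, C)}$, which holds unconditionally. Then, using the hypothesis $\sp(P, C) \Rightarrow Q$ together with the trivial implication $P \Rightarrow P$, I would apply the \textsc{OX-Cons} rule to weaken the postcondition from $\sp(P, C)$ to $Q$, yielding $\htrip{P}{C}{Q}$ as desired.

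There is essentially no obstacle here: the content of the lemma has already been discharged in \Cref{lem:sp-provable} (the induction on $C$ establishing that the strongest postcondition is provable), and all that remains is a single application of the consequence rule. The only thing to be careful about is citing \textsc{OX-Cons} in the form that weakens only the postcondition (taking $P' \equiv P$, $Q' \equiv \sp(P,C)$), which is immediate from the rule as stated in \Cref{app:hoare}. Together with \Cref{lem:hoare-imply-sp} (the converse direction, already proved), this lemma will feed directly into the biconditional \Cref{thm:hoare-iff-sp}.
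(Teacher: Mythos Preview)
Your proposal is correct and matches the paper's proof exactly: the paper writes ``Immediate from \cref{lem:sp-provable}, applying \textsc{OX-Cons},'' which is precisely the argument you describe.
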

\begin{proof}
  Immediate from \cref{lem:sp-provable}, applying \textsc{OX-Cons}.
\end{proof}

\begin{theorem}\label{thm:hoare-iff-sp}
  $\sp(P, C) \Rightarrow Q \iff \htrip{P}{C}{Q}$
\end{theorem}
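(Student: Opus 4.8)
The plan is to observe that this theorem is nothing more than the conjunction of two implications that have already been discharged as lemmas, exactly mirroring the proof of \cref{thm:wp-iff-hoare}. For the $(\Longleftarrow)$ direction --- that $\htrip{P}{C}{Q}$ entails $\sp(P, C) \Rightarrow Q$ --- I would cite \cref{lem:hoare-imply-sp} verbatim. For the $(\Longrightarrow)$ direction --- that $\sp(P, C) \Rightarrow Q$ entails $\htrip{P}{C}{Q}$ --- I would cite \cref{lem:sp-imply-hoare} verbatim. So the proof reduces to a single line: ``$\Longrightarrow$: \cref{lem:sp-imply-hoare}; $\Longleftarrow$: \cref{lem:hoare-imply-sp}.''

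Since the substance lives in those lemmas, here is how I would reach them were they not already available. The backward direction proceeds by induction on the derivation of $\htrip{P}{C}{Q}$: the \textsc{OX-Skip} and \textsc{OX-Assign} base cases unfold the defining equations for $\sp$ and then use elementary first-order reasoning about substitution --- in particular the entailment $\exists v (x = E[x/v] \wedge Q[x/E][x/v]) \Rightarrow Q$, which holds because the equation forces $Q[x/E][x/v]$ to be $Q[x/x] \equiv Q$. The \textsc{OX-Seq} and \textsc{OX-If} cases compose the induction hypotheses, invoking monotonicity of $\sp$ in its first argument (\cref{lem:sp-strengthen}) to push the first hypothesis through the second command in the sequential case, and a disjunction step in the conditional case. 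The \textsc{OX-Cons} case again uses \cref{lem:sp-strengthen} on the strengthened precondition together with transitivity of $\Rightarrow$.

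The forward direction is cleanest via the standard two-step route. First establish $\htrip{P}{C}{\sp(P, C)}$ by induction on $C$ (\cref{lem:sp-provable}); the only delicate case is assignment, which requires $P \Rightarrow (\sp(P, x \coloneq E))[x/E]$, witnessed by instantiating the existential quantifier with $v = x$, after which \textsc{OX-Cons} adapts \textsc{OX-Assign} to the desired shape. Then apply \textsc{OX-Cons} once more, using the hypothesis $\sp(P, C) \Rightarrow Q$ to weaken the postcondition from $\sp(P, C)$ to $Q$. I do not expect a genuine obstacle: the work is entirely in the routine inductions of \cref{lem:hoare-imply-sp} and \cref{lem:sp-provable}, and the only error-prone points are the substitution manipulations in the assignment cases, which must be tracked carefully but raise no conceptual difficulty.
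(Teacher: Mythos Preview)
Your proposal is correct and matches the paper's proof exactly: the paper proves this theorem with the single line ``$\Longrightarrow$: \Cref{lem:sp-imply-hoare}; $\Longleftarrow$: \Cref{lem:hoare-imply-sp}.'' Your sketches of how those lemmas are established also align with the paper's actual proofs of \cref{lem:hoare-imply-sp} and \cref{lem:sp-provable}/\cref{lem:sp-imply-hoare}.
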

\begin{proof}
  $\Longrightarrow$: \Cref{lem:sp-imply-hoare}; $\Longleftarrow$: \Cref{lem:hoare-imply-sp}
\end{proof}

\section{Incorrectness logic}\label{app:il}

Incorrectness logic triples are denoted $\itrip{P}{C}{Q}$. Deductions in incorrectness logic are defined as follows:
\begin{mathpar}
  \inferrule[UX-Skip]
    { }
    { \itrip{P}{\ttt{skip}}{P} } \and
  \inferrule[UX-Seq]
    { \itrip{P}{C_1}{Q} \\ \itrip{Q}{C_2}{R} }
    { \itrip{P}{C_1; C_2}{R} } \and
  \inferrule[UX-Assign]
    { }
    { \itrip{P}{x \coloneq E}{\exists v (x = E[x/v] \wedge P[x/v])} } \and
  \inferrule[UX-IfThen]
    { \itrip{E \wedge P}{C_1}{Q} }
    { \itrip{E \wedge P}{\ifstmt{E}{C_1}{C_2}}{Q} } \and
  \inferrule[UX-IfElse]
    { \itrip{\neg E \wedge P}{C_2}{Q} }
    { \itrip{\neg E \wedge P}{\ifstmt{E}{C_1}{C_2}}{Q} } \\
  \inferrule[UX-Cons]
    { P' \Rightarrow P \\\\ \itrip{P'}{C}{Q'} \\\\ Q \Rightarrow Q' }
    { \itrip{P}{C}{Q} } \and
  \inferrule[UX-Disj]
    { \itrip{P_1}{C}{Q_1} \\\\
      \itrip{P_2}{C}{Q_2} }
    { \itrip{P_1 \vee P_2}{C}{Q_1 \vee Q_2} }
\end{mathpar}

\subsection{Strongest postconditions}

\begin{lemma}\label{lem:il-sp-provable}
  $\itrip{P}{C}{\sp(P, C)}$
\end{lemma}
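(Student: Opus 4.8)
The plan is to mirror the proof of \cref{lem:sp-provable}, doing structural induction on $C$ with the statement universally quantified over $P$, but invoking the UX rules in place of the OX rules. The $\kskip$ and $x \coloneq E$ cases are immediate: $\sp(P, \kskip) \equiv P$, so \textsc{UX-Skip} gives $\itrip{P}{\kskip}{P}$; and \textsc{UX-Assign} yields exactly $\itrip{P}{x \coloneq E}{\exists v(x = E[x/v] \wedge P[x/v])}$, which is $\sp(P, x \coloneq E)$ by definition.

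For $C_1; C_2$, I would apply the induction hypothesis twice to obtain $\itrip{P}{C_1}{\sp(P, C_1)}$ and $\itrip{\sp(P, C_1)}{C_2}{\sp(\sp(P, C_1), C_2)}$, chain these with \textsc{UX-Seq}, and rewrite $\sp(\sp(P, C_1), C_2) \equiv \sp(P, C_1; C_2)$ by the definition of $\sp$.

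The interesting case is $\ifstmt{E}{C_1}{C_2}$, where $\sp(P, \ifstmt{E}{C_1}{C_2}) \equiv \sp(P \wedge E, C_1) \vee \sp(P \wedge \neg E, C_2)$. By the induction hypothesis (instantiated at $P \wedge E$ and $P \wedge \neg E$), $\itrip{E \wedge P}{C_1}{\sp(P \wedge E, C_1)}$ and $\itrip{\neg E \wedge P}{C_2}{\sp(P \wedge \neg E, C_2)}$. Applying \textsc{UX-IfThen} to the first and \textsc{UX-IfElse} to the second gives $\itrip{E \wedge P}{\ifstmt{E}{C_1}{C_2}}{\sp(P \wedge E, C_1)}$ and $\itrip{\neg E \wedge P}{\ifstmt{E}{C_1}{C_2}}{\sp(P \wedge \neg E, C_2)}$. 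Combining them with \textsc{UX-Disj} yields $\itrip{(E \wedge P) \vee (\neg E \wedge P)}{\ifstmt{E}{C_1}{C_2}}{\sp(P \wedge E, C_1) \vee \sp(P \wedge \neg E, C_2)}$, and since $(E \wedge P) \vee (\neg E \wedge P) \equiv P$ and the paper identifies equivalent propositions, this is the desired triple (one could equivalently finish via \textsc{UX-Cons}).

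The main obstacle — modest here — is the $\kif$ case: unlike Hoare logic, IL has no single rule covering both branches, so one must split the precondition as $P \equiv (E \wedge P) \vee (\neg E \wedge P)$, derive a triple for each branch via \textsc{UX-IfThen}/\textsc{UX-IfElse}, and reassemble with \textsc{UX-Disj}. I would also double-check the cosmetic mismatch between the order of conjuncts (\textsc{UX-IfThen} is stated with $E \wedge P$, whereas $\sp$ is written with $P \wedge E$), which is harmless under the equivalence conventions already in force. No appeal to \cref{lem:sp-if} or \cref{lem:sp-false} is strictly needed, though they justify the shape of the $\sp$ calculation being used.
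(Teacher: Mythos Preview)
Your proposal is correct and matches the paper's proof essentially step for step: structural induction on $C$, with \textsc{UX-Skip}, \textsc{UX-Assign}, and \textsc{UX-Seq} handling the first three cases directly, and the $\kif$ case handled exactly as you describe---instantiate the IH at $E \wedge P$ and $\neg E \wedge P$, apply \textsc{UX-IfThen}/\textsc{UX-IfElse}, combine via \textsc{UX-Disj}, and collapse $(E \wedge P) \vee (\neg E \wedge P) \equiv P$ under the paper's identification of equivalent formulas. Your remarks about the conjunct ordering and the non-use of \cref{lem:sp-if}/\cref{lem:sp-false} are also accurate.
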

\begin{proof}
  By induction on $C$:
  \begin{description}
    \item[$\kskip$:] By \textsc{UX-Skip} $\itrip{P}{\kskip}{P}$ and $P \equiv \sp(P, \kskip)$ by definition.
    
    \item[$x \coloneq E$:] \leavevmode
    \begin{description}
      \item[(1)] $\itrip{P}{x \coloneq E}{\exists v (x = E[x/v] \wedge P[x/v])}$ by \textsc{UX-Assign}
      \item[(2)] $\sp(P, x \coloneq E) \equiv \exists v (x = E[x/v] \wedge P[x/v])$ by definition of $\sp$
      \item[(3)] $\itrip{P}{x \coloneq E}{sp(P, x \coloneq E)}$ by (1) and (2)
    \end{description}
    
    \item[$C_1; C_2$:]\leavevmode
      \begin{description}
        \item[(1)] $\itrip{P}{C_1}{\sp(P, C_1)}$ by induction
        \item[(2)] $\itrip{\sp(P, C_1)}{C_2}{\sp(\sp(P, C_1), C_2)}$ by induction
        \item[(3)] $\itrip{P}{C_1; C_2}{\sp(\sp(P, C_1), C_2)}$ by \textsc{UX-Seq} using (1), (2)
        \item[(4)] $\sp(P, C_1; C_2) \equiv \sp(\sp(P, C_1), C_2)$ by definition
        \item[(5)] $\itrip{P}{C_1; C_2}{\sp(P, C_1; C_2)}$ by (3), (4)
      \end{description}

    \item[$\ifstmt{E}{C_1}{C_2}$:]\leavevmode
      \begin{description}
        \item[(1)] $\itrip{E \wedge P}{C_1}{\sp(E \wedge P, C_1)}$ by induction
        \item[(2)] $\itrip{\neg E \wedge P}{C_2}{\sp(\neg E \wedge P, C_2)}$ by induction
        \item[(3)] $\itrip{E \wedge P}{\ifstmt{E}{C_1}{C_2}}{\sp(E \wedge P, C_1)}$ by \textsc{UX-IfThen} using (1)
        \item[(4)] $\itrip{\neg E \wedge P}{\ifstmt{E}{C_1}{C_2}}{\sp(\neg E \wedge P, C_2)}$ by \textsc{UX-IfElse} using (2)
        \item[(5)] $\itrip{P}{\ifstmt{E}{C_1}{C_2}}{\sp(E \wedge P, C_1) \vee \sp(\neg E \wedge P, C_2)}$ by \textsc{UX-Disj} using (3) and (4)
        \item[(6)] $\sp(P, \ifstmt{E}{C_1}{C_2}) \equiv \sp(E \wedge P, C_1) \vee \sp(\neg E \wedge P, C_2)$ by definition
        \item[(7)] $\vdash [P] ~\ifstmt{E}{C_1}{C_2} \\{} [\sp(P, \ifstmt{E}{C_1}{C_2})]$ by (5) and (6) \qedhere
      \end{description}
  \end{description}
\end{proof}

\begin{lemma}\label{lem:sp-to-il}
  If $Q \Rightarrow \sp(P, C)$ then $\itrip{P}{C}{Q}$.
\end{lemma}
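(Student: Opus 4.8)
The plan is to mirror the structure of \cref{lem:sp-imply-hoare}: first derive the triple with the strongest postcondition, then weaken it via the consequence rule. Concretely, I would begin by invoking \cref{lem:il-sp-provable} to obtain $\itrip{P}{C}{\sp(P, C)}$, the ``canonical'' UX deduction for $P$ and $C$, whose postcondition is logically equal to $\sp(P, C)$.

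Next I would apply \textsc{UX-Cons}. Recall that in IL the consequence rule permits \emph{strengthening} the postcondition (the conclusion's $Q$ must imply the premise's $Q'$) and \emph{weakening} the precondition. So I instantiate the rule with $P' := P$, making the side condition $P' \Rightarrow P$ the trivial $P \Rightarrow P$; with $Q' := \sp(P, C)$; and with the conclusion's postcondition being the given $Q$. The remaining side condition $Q \Rightarrow Q'$ is then exactly the hypothesis $Q \Rightarrow \sp(P, C)$. This yields $\itrip{P}{C}{Q}$, as desired.

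There is essentially no obstacle here; the only point that merits care is the direction of \textsc{UX-Cons}, which is the mirror image of \textsc{OX-Cons}. It is precisely this direction — postconditions may only shrink — that makes ``$Q$ implies the strongest postcondition'' the correct hypothesis for an IL triple, dual to the ``strongest postcondition implies $Q$'' hypothesis in the Hoare-logic case (\cref{lem:sp-imply-hoare}).
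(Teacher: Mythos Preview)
Your proposal is correct and is exactly the paper's proof: invoke \cref{lem:il-sp-provable} to get $\itrip{P}{C}{\sp(P, C)}$, then apply \textsc{UX-Cons} with the hypothesis $Q \Rightarrow \sp(P,C)$ (and trivial $P \Rightarrow P$) to conclude $\itrip{P}{C}{Q}$.
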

\begin{proof}
  Immediate from \cref{lem:il-sp-provable} and \textsc{UX-Cons}.
\end{proof}

\begin{lemma}\label{lem:il-to-sp}
  If $\itrip{P}{C}{Q}$ then $Q \Rightarrow \sp(P, C)$.
\end{lemma}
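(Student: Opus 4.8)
The plan is to prove this by induction on the derivation of $\itrip{P}{C}{Q}$, mirroring the structure of \cref{lem:il-sp-provable} and treating each of the seven IL rules in turn. The base cases are immediate: for \textsc{UX-Skip} we have $Q \equiv P \equiv \sp(P, \kskip)$, and for \textsc{UX-Assign} the postcondition in the rule is literally $\exists v(x = E[x/v] \wedge P[x/v])$, which is exactly $\sp(P, x \coloneq E)$ by definition; in both cases $Q \Rightarrow \sp(P, C)$ holds (with equivalence, in fact).

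For the inductive cases I would proceed as follows. In \textsc{UX-Seq}, from $\itrip{P}{C_1}{Q}$ and $\itrip{Q}{C_2}{R}$ the induction hypotheses give $Q \Rightarrow \sp(P, C_1)$ and $R \Rightarrow \sp(Q, C_2)$; applying \cref{lem:sp-strengthen} to the first yields $\sp(Q, C_2) \Rightarrow \sp(\sp(P, C_1), C_2)$, and the latter is $\sp(P, C_1; C_2)$ by definition, so $R \Rightarrow \sp(P, C_1; C_2)$. In \textsc{UX-Cons}, the hypotheses $P' \Rightarrow P$, $Q \Rightarrow Q'$, and (by induction) $Q' \Rightarrow \sp(P', C)$ combine with \cref{lem:sp-strengthen} ($\sp(P', C) \Rightarrow \sp(P, C)$) to give $Q \Rightarrow Q' \Rightarrow \sp(P', C) \Rightarrow \sp(P, C)$. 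In \textsc{UX-Disj}, the induction hypotheses $Q_1 \Rightarrow \sp(P_1, C)$ and $Q_2 \Rightarrow \sp(P_2, C)$ give $Q_1 \vee Q_2 \Rightarrow \sp(P_1, C) \vee \sp(P_2, C)$, which is $\sp(P_1 \vee P_2, C)$ by \cref{lem:sp-disj}.

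The two conditional rules are where a little care is needed, though nothing deep. For \textsc{UX-IfThen}, the precondition has the specific shape $E \wedge P$ and the induction hypothesis gives $Q \Rightarrow \sp(E \wedge P, C_1)$; by \cref{lem:sp-if} we have $\sp(E \wedge P, \ifstmt{E}{C_1}{C_2}) \equiv \sp(E \wedge P, C_1)$ (after commuting the conjunction to match the statement of that lemma), so $Q \Rightarrow \sp(E \wedge P, \ifstmt{E}{C_1}{C_2})$ as required; \textsc{UX-IfElse} is symmetric using the second equation of \cref{lem:sp-if}. I do not anticipate any real obstacle; the only thing to watch is keeping the precondition forms aligned with \cref{lem:sp-if}, and remembering that \textsc{UX-IfThen}/\textsc{UX-IfElse} only conclude triples whose preconditions already incorporate the guard. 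Together with \cref{lem:sp-to-il}, this lemma establishes the characterization $\itrip{P}{C}{Q} \iff Q \Rightarrow \sp(P, C)$.
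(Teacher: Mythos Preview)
Your proposal is correct and matches the paper's proof essentially step for step: both proceed by induction on the IL derivation, dispatching \textsc{UX-Skip} and \textsc{UX-Assign} directly from the definition of $\sp$, handling \textsc{UX-Seq} and \textsc{UX-Cons} via \cref{lem:sp-strengthen}, \textsc{UX-Disj} via \cref{lem:sp-disj}, and the two conditional rules via \cref{lem:sp-if}. The only cosmetic difference is that the paper phrases the \textsc{UX-IfThen}/\textsc{UX-IfElse} cases by writing the precondition as $E \wedge P'$ (introducing a fresh name for the remainder), whereas you keep the rule's original $E \wedge P$ throughout; the content is identical.
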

\begin{proof}
  By induction on the derivation $\itrip{P}{C}{Q}$:
  \begin{description}
    \item[UX-Skip:]\leavevmode
      \begin{description}
        \item[(1)] $Q \equiv P$ by inversion
        \item[(2)] $\sp(P, \kskip) \equiv P$ by definition
        \item[(3)] $Q \equiv \sp(P, \kskip)$ by (1) and (2)
      \end{description}

    \item[UX-Assign:]\leavevmode
      \begin{description}
        \item[(1)] $Q \equiv \exists v (x = E[x/v] \wedge P[x/v])$ by inversion
        \item[(2)] $\sp(P, x \coloneq E) \equiv \exists v (x = E[x/v] \wedge P[x/v])$ by definition
        \item[(3)] $Q \equiv \sp(P, x \coloneq E)$ by (1) and (2)
      \end{description}

    \item[UX-Seq:]\leavevmode
      \begin{description}
        \item[(1)] $\itrip{P}{C_1}{R}$ for some $R$ by inversion
        \item[(2)] $\itrip{R}{C_2}{Q}$ by inversion
        \item[(3)] $Q \Rightarrow \sp(R, C_2)$ by induction using (2)
        \item[(4)] $R \Rightarrow \sp(P, C_1)$ by induction using (1)
        \item[(5)] $\sp(R, C_2) \Rightarrow \sp(\sp(P, C_1), C_2)$ by \cref{lem:sp-strengthen} using (4)
        \item[(6)]
          \begin{align*}
            Q &\Rightarrow \sp(R, C_2) &\text{(3)} \\
              &\Rightarrow \sp(\sp(P, C_1), C_2) &\text{(5)} \\
              &\equiv \sp(P, C_1; C_2) &\text{defn $\sp$}
          \end{align*}
      \end{description}
      
    \item[UX-IfThen:]\leavevmode
      \begin{description}
        \item[(1)] $C \equiv \ifstmt{E}{C_1}{C_2}$ for some $E, C_1, C_2$ by inversion
        \item[(2)] $P \equiv E \wedge P'$ for some $P'$ by inversion
        \item[(3)] $\itrip{E \wedge P'}{C_1}{Q}$ by inversion
        \item[(4)] $Q \Rightarrow \sp(E \wedge P', C_1)$ by induction using (3)
        \item[(5)] $\sp(E \wedge P', C_1) \equiv \sp(E \wedge P', \ifstmt{E}{C_1}{C_2})$ by \cref{lem:sp-if}
        \item[(6)]
          \begin{align*}
            Q &\Rightarrow \sp(E \wedge P', C_1) &\text{(4)} \\
              &\equiv \sp(E \wedge P', \ifstmt{E}{C_1}{C_2}) &\text{(5)} \\
              &\equiv \sp(P, \ifstmt{E}{C_1}{C_2}) &\text{(2)}
          \end{align*}
      \end{description}
    \item[UX-IfElse:]\leavevmode
      \begin{description}
        \item[(1)] $C \equiv \ifstmt{E}{C_1}{C_2}$ for some $E, C_1, C_2$ by inversion
        \item[(2)] $P \equiv \neg E \wedge P'$ for some $P'$ by inversion
        \item[(3)] $\itrip{\neg E \wedge P'}{C_2}{Q}$ by inversion
        \item[(4)] $Q \Rightarrow \sp(\neg E \wedge P', C_2)$ by induction using (3)
        \item[(5)] $\sp(\neg E \wedge P', C_2) \equiv \sp(\neg E \wedge P', \\ \ifstmt{E}{C_1}{C_2})$ by \cref{lem:sp-if}
        \item[(6)]
          \begin{align*}
            Q &\Rightarrow \sp(\neg E \wedge P', C_2) &\text{(4)} \\
              &\equiv \sp(\neg E \wedge P', \ifstmt{E}{C_1}{C_2}) &\text{(5)} \\
              &\equiv \sp(P, \ifstmt{E}{C_1}{C_2}) &\text{(2)}
          \end{align*}
      \end{description}
    \item[UX-Cons:]\leavevmode
      \begin{description}
        \item[(1)] $\itrip{P'}{C}{Q'}$ for some $P', Q'$ by inversion
        \item[(2)] $P' \Rightarrow P$ by inversion
        \item[(3)] $Q \Rightarrow Q'$ by inversion
        \item[(4)] $Q' \Rightarrow \sp(P', C)$ by induction using (1)
        \item[(5)] $\sp(P', C) \Rightarrow \sp(P, C)$ by \cref{lem:sp-strengthen} using (2)
        \item[(6)] $Q \Rightarrow Q' \Rightarrow \sp(P', C) \Rightarrow \sp(P, C)$ by (3), (4), and (5)
      \end{description}
    \item[UX-Disj:]\leavevmode
      \begin{description}
        \item[(1)] $\itrip{P_1}{C}{Q_1}$ for some $P_1, Q_1$ by inversion
        \item[(2)] $\itrip{P_2}{C}{Q_2}$ for some $P_2, Q_2$ by inversion
        \item[(3)] $P \equiv P_1 \vee P_2$ by inversion
        \item[(4)] $Q \equiv Q_1 \vee Q_2$ by inversion
        \item[(5)] $Q_1 \Rightarrow \sp(P_1, C)$ by induction using (1)
        \item[(6)] $Q_2 \Rightarrow \sp(P_2, C)$ by induction using (2)
        \item[(7)] $Q_1 \vee Q_2 \Rightarrow \sp(P_1, C) \vee \sp(P_2, C)$ by logic using (5) and (6)
        \item[(8)]
          \begin{align*}
            Q &\equiv Q_1 \vee Q_2 &\text{(4)} \\
              &\Rightarrow \sp(P_1, C) \vee \sp(P_2, C) &\text{(7)} \\
              &\equiv \sp(P_1 \vee P_2, C) &\text{\Cref{lem:sp-disj}} \\
              &\equiv \sp(P, C) &\text{(3)} &\qedhere
          \end{align*}
      \end{description}
  \end{description}
\end{proof}

\begin{theorem}\label{thm:il-iff-sp}
  $\itrip{P}{C}{Q} \iff Q \Rightarrow \sp(P, C)$
\end{theorem}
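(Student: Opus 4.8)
The plan is to obtain the biconditional by pasting together the two implications that were just established, exactly mirroring the structure used for \Cref{thm:hoare-iff-sp}.

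For the direction $Q \Rightarrow \sp(P, C) \implies \itrip{P}{C}{Q}$, I would simply invoke \Cref{lem:sp-to-il}, which already delivers precisely this statement (it in turn rests on \Cref{lem:il-sp-provable}, the provability of $\itrip{P}{C}{\sp(P,C)}$, followed by an application of \textsc{UX-Cons} to weaken the postcondition from $\sp(P,C)$ down to $Q$). For the converse direction $\itrip{P}{C}{Q} \implies Q \Rightarrow \sp(P, C)$, I would invoke \Cref{lem:il-to-sp}, whose proof proceeds by induction on the derivation of the IL triple, using \Cref{lem:sp-strengthen} for the consequence and sequencing cases, \Cref{lem:sp-if} for the conditional-branch cases, and \Cref{lem:sp-disj} for the disjunction case.

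Since both halves are already proved as standalone lemmas, there is no real obstacle remaining at the level of the theorem itself; the only ``work'' is stating that the two implications combine into the iff. If I had to identify where the genuine difficulty lies, it is entirely inside \Cref{lem:il-to-sp}: one must check that $\sp$ is monotone enough to absorb every consequence/weakening step in an IL derivation, and in particular the \textsc{UX-IfThen}/\textsc{UX-IfElse} cases require the observation (\Cref{lem:sp-if}) that $\sp(E \wedge P, \ifstmt{E}{C_1}{C_2}) \equiv \sp(E \wedge P, C_1)$, i.e.\ that the guard already selects the correct branch. With those lemmas in hand the theorem's proof is a one-liner: $\Longrightarrow$ by \Cref{lem:sp-to-il}, $\Longleftarrow$ by \Cref{lem:il-to-sp}.
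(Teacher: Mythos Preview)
Your proposal is correct and matches the paper's own proof, which is literally the one-liner ``$\Longrightarrow$: \Cref{lem:il-to-sp}; $\Longleftarrow$: \Cref{lem:sp-to-il}.'' One small slip: in your final summary sentence you have the arrow labels reversed (you write ``$\Longrightarrow$ by \Cref{lem:sp-to-il}, $\Longleftarrow$ by \Cref{lem:il-to-sp}''), but your preceding explanation pairs each lemma with the right direction, so this is just a typo rather than a conceptual error.
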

\begin{proof}
  $\Longrightarrow$: \Cref{lem:il-to-sp}; $\Longleftarrow$: \Cref{lem:sp-to-il}.
\end{proof}

\subsection{Satisfiability}

\begin{lemma}\label{lem:il-pre-false}
  If $\itrip{\bot}{C}{Q}$ then $Q \equiv \bot$.
\end{lemma}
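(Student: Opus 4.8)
This follows immediately from the strongest-postcondition characterization of incorrectness logic. The plan is to combine \Cref{thm:il-iff-sp}, which states $\itrip{P}{C}{Q} \iff Q \Rightarrow \sp(P, C)$, with \Cref{lem:sp-false}, which states $\sp(\bot, C) \equiv \bot$.

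\textbf{Steps.} First I would assume $\itrip{\bot}{C}{Q}$ and apply the forward direction of \Cref{thm:il-iff-sp} (instantiated at $P \equiv \bot$) to obtain $Q \Rightarrow \sp(\bot, C)$. Next I would invoke \Cref{lem:sp-false} to rewrite $\sp(\bot, C)$ as $\bot$, which yields $Q \Rightarrow \bot$. Finally, since $\bot \Rightarrow Q$ holds trivially by logic, the definition of $\equiv$ gives $Q \equiv \bot$. Concretely, the chain is
\begin{align*}
  Q &\Rightarrow \sp(\bot, C) &\text{\Cref{thm:il-iff-sp}} \\
    &\equiv \bot &\text{\Cref{lem:sp-false}}
\end{align*}
and hence $Q \equiv \bot$.

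\textbf{Main obstacle.} There is essentially no obstacle remaining at this point: the substantive work lives in \Cref{lem:sp-false} (the induction on $C$ showing $\sp(\bot, C) \equiv \bot$) and in the soundness direction of \Cref{thm:il-iff-sp} (\Cref{lem:il-to-sp}), both of which are already established. The only thing to be careful about is citing \Cref{thm:il-iff-sp} in the correct direction, namely that a valid IL triple implies $Q \Rightarrow \sp(P,C)$; alternatively one could bypass the theorem and appeal directly to \Cref{lem:il-to-sp}, but using the theorem keeps the argument uniform with the surrounding development.
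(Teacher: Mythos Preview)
Your proposal is correct and matches the paper's proof essentially step for step: derive $Q \Rightarrow \sp(\bot,C)$ from the IL/$\sp$ characterization, rewrite via \Cref{lem:sp-false}, and conclude $Q \equiv \bot$. (In fact, the paper's step~(1) cites \Cref{thm:hoare-iff-sp}, which appears to be a typo for \Cref{thm:il-iff-sp}; your citation is the right one.)
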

\begin{proof}\leavevmode
  \begin{description}
    \item[(1)] $Q \Rightarrow \sp(\bot, C)$ by \cref{thm:hoare-iff-sp}
    \item[(2)] $\sp(\bot, C) \equiv \bot$ by \cref{lem:sp-false}
    \item[(3)] $Q \Rightarrow \bot$ logic using (1) and (2)
    \item[(4)] $Q \equiv \bot$ by logic using (3) \qedhere
  \end{description}
\end{proof}

\begin{lemma}\label{lem:il-sat}
  If $\itrip{P}{C}{Q}$ and $Q \in \SatFormula$ then $P \in \SatFormula$.
\end{lemma}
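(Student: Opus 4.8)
The plan is to prove the contrapositive: assuming $P$ is unsatisfiable, show that $Q$ must be unsatisfiable too. Since we identify a proposition with its equivalence class, $P \notin \SatFormula$ is the same as $P \equiv \bot$, so the hypothesis $\itrip{P}{C}{Q}$ is literally $\itrip{\bot}{C}{Q}$. Then \cref{lem:il-pre-false} immediately yields $Q \equiv \bot$, i.e. $Q \notin \SatFormula$, which is exactly the contrapositive of the claim. This mirrors the structure of the proof of \cref{lem:sp-sat}.

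If one prefers not to invoke \cref{lem:il-pre-false} directly, there is an equally short route through strongest postconditions: from $\itrip{P}{C}{Q}$ and \cref{thm:il-iff-sp} we obtain $Q \Rightarrow \sp(P, C)$; a formula implied by a satisfiable formula is satisfiable, so $\sp(P, C) \in \SatFormula$; and since $\sp(\bot, C) \equiv \bot$ by \cref{lem:sp-false}, an unsatisfiable $P$ (i.e. $P \equiv \bot$) would force $\sp(P, C) \equiv \bot$, contradicting satisfiability of $\sp(P,C)$. I would present the first route as the main proof since it is the most direct.

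I do not anticipate any real obstacle: the lemma is essentially a one-line corollary. The only point requiring care is the bookkeeping convention that a proposition denotes its equivalence class, which is what licenses replacing "$P \notin \SatFormula$" by the syntactic statement "$P \equiv \bot$" so that \cref{lem:il-pre-false} applies.
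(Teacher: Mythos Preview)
Your main route is correct and matches the paper's proof essentially verbatim: both argue the contrapositive, replace $P \notin \SatFormula$ by $P \equiv \bot$, and invoke \cref{lem:il-pre-false} to conclude $Q \equiv \bot$. Your alternative route via \cref{thm:il-iff-sp} and \cref{lem:sp-false} is also valid but unnecessary here.
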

\begin{proof}
  Assume $\itrip{P}{C}{Q}$, then we prove the contrapositive; that is, $P \notin \SatFormula \implies Q \notin \SatFormula$.

  Assuming $P \notin \SatFormula$, since $P \in \Formula$ we get $P \equiv \bot$, and thus $\itrip{\bot}{C}{Q}$ by assumption. Then by \cref{lem:il-pre-false}, $Q \equiv \bot$ thus $Q \notin \SatFormula$.
\end{proof}

\begin{lemma}\label{lem:il-wp-sat}
  If $\itrip{P}{C}{Q}$ and $Q \in \SatFormula$ then $P \wedge \wp(P, C) \in \SatFormula$.
\end{lemma}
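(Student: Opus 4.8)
The plan is to reduce this entirely to the fixpoint machinery already in place. First I would apply \cref{lem:il-to-sp} (equivalently \cref{thm:il-iff-sp}) to the hypothesis $\itrip{P}{C}{Q}$ to obtain $Q \Rightarrow \sp(P, C)$. This is precisely the premise of \cref{lem:sp-wp-fixpoint-p}, so invoking that lemma yields $Q \equiv \sp(P \wedge \wp(C, Q), C)$ (reading the $\wp(P,C)$ of the statement as $\wp(C,Q)$, consistent with the argument order fixed in \cref{app:wp}).

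Next I would combine this equivalence with the remaining hypothesis $Q \in \SatFormula$ to conclude $\sp(P \wedge \wp(C, Q), C) \in \SatFormula$, i.e.\ $\sp(P \wedge \wp(C, Q), C) \not\equiv \bot$.

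Finally I would close by contraposition on $\bot$: if $P \wedge \wp(C, Q) \equiv \bot$, then \cref{lem:sp-false} forces $\sp(P \wedge \wp(C, Q), C) \equiv \sp(\bot, C) \equiv \bot$, contradicting the previous step; hence $P \wedge \wp(C, Q) \not\equiv \bot$, which is exactly $P \wedge \wp(C, Q) \in \SatFormula$. (Equivalently, one can note that ``$\sp(R,C) \in \SatFormula \implies R \in \SatFormula$'' is just the contrapositive of \cref{lem:sp-is-false}, the same step used in \cref{lem:sp-sat}.)

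I do not expect any genuine obstacle here: every ingredient is an already-established lemma and the argument is a short three-step chain. The only point requiring care is bookkeeping — checking that the premise consumed by \cref{lem:sp-wp-fixpoint-p}, namely $Q \Rightarrow \sp(P, C)$, is exactly what an IL triple provides via \cref{lem:il-to-sp}, and keeping the $\wp$ argument-order convention consistent between the body of the paper and the appendix.
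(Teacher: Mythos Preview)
Your argument is correct and essentially matches the paper's: both apply \cref{lem:il-to-sp} and then \cref{lem:sp-wp-fixpoint-p}, the only difference being that the paper finishes via \cref{lem:sp-to-il} and \cref{lem:il-sat} where you go directly through \cref{lem:sp-false}. One small slip: your parenthetical about \cref{lem:sp-is-false} has the direction backwards (its contrapositive is \cref{lem:sp-sat}, i.e.\ $R$ satisfiable $\implies \sp(R,C)$ satisfiable), so the correct citation for your final step is indeed \cref{lem:sp-false}, as in your main argument.
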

\begin{proof}\leavevmode
  \begin{description}
    \item[(1)] $\itrip{P}{C}{Q}$ by assumption
    \item[(2)] $Q \in \SatFormula$ by assumption
    \item[(3)] $Q \Rightarrow \sp(P, C)$ by \cref{lem:il-to-sp} using (1)
    \item[(4)] $\sp(P \wedge \wp(P, C), C) \equiv Q$ by \cref{lem:sp-wp-fixpoint-p} using (3)
    \item[(5)] $\itrip{P \wedge \wp(P, C)}{C}{Q}$ by \cref{lem:sp-to-il} using (4)
    \item[(6)] $P \wedge \wp(C, Q) \in \SatFormula$ by \cref{lem:il-sat} using (2) and (5) \qedhere
  \end{description}
\end{proof}

\section{Exact logic}\label{app:el}

Exact logic triples are denoted by $\etrip{P}{C}{Q}$. Deductions in exact logic are characterized by the following rules:
\begin{mathpar}
  \inferrule[EX-Skip]
    { }
    { \etrip{\top}{\ttt{skip}}{\top} } \hfill
  \inferrule[EX-Assign]
    { x \notin \fv(E') }
    { \etrip{x = E'}{x \coloneq E}{x = E[x/E']} } \\
  \inferrule[EX-IfThen]
    { \etrip{P \wedge E}{C_1}{Q} }
    { \etrip{P \wedge E}{\ifstmt{E}{C_1}{C_2}}{Q} } \hfill
    \inferrule[EX-Seq]
    { \etrip{P}{C_1}{R} \\\\ \etrip{R}{C_2}{Q} }
    { \etrip{P}{C_1; C_2}{Q} } \\
    \inferrule[EX-IfElse]
    { \etrip{P \wedge \neg E}{C_2}{Q} }
    { \etrip{P \wedge \neg E}{\ifstmt{E}{C_1}{C_2}}{Q} } \hfill
  \inferrule[EX-Frame]
    { \operatorname{mod}(C) \cap \fv(R) = \emptyset \\\\
      \etrip{P}{C}{Q} }
    { \etrip{P \wedge R}{C}{Q \wedge R} } \\
  \inferrule[EX-Exists]
    { \etrip{P}{C}{Q} \\\\ x \notin \fv(C)}
    { \etrip{\exists x\, P}{C}{\exists x\, Q} } \hfill
  \inferrule[EX-Disj]
    { \etrip{P_1}{C}{Q_1} \\\\ \etrip{P_2}{C}{Q_2} }
    { \etrip{P_1 \vee P_2}{C}{Q_1 \vee Q_2} }
\end{mathpar}


Note: We drop the equivalence rule, since it is immediately valid by our characterization of formulas by equivalence classes.

\subsection{Strongest postconditions}\label{app:el-sp}

\begin{lemma}\label{lem:el-sp-provable}
  $\etrip{P}{C}{\sp(P, C)}$
\end{lemma}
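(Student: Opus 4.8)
The plan is to prove this by induction on the structure of $C$, using in each case the explicit calculation of $\sp$ recorded in \Cref{app:sp} and then assembling the matching EL derivation from the rules in \Cref{app:el}. A point to keep in mind throughout is that exact logic has \emph{no} consequence rule, so whenever the rules deliver a triple whose pre- or postcondition is only equivalent (rather than syntactically identical) to the one we want, we invoke the convention that propositions denote their equivalence classes. The case $C = \kskip$ illustrates this already: $\sp(P,\kskip) \equiv P$, and from \textsc{EX-Skip}, which gives $\etrip{\top}{\kskip}{\top}$, one application of \textsc{EX-Frame} with frame $R \equiv P$ (legal since $\mod(\kskip) = \emptyset$) produces $\etrip{\top \wedge P}{\kskip}{\top \wedge P}$, i.e.\ $\etrip{P}{\kskip}{P}$.

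The assignment case is where I expect the real work to lie. Here $\sp(P, x \coloneq E) \equiv \exists v\,(x = E[x/v] \wedge P[x/v])$ with $v \notin \fv(P)$; I would strengthen the choice of $v$ so that also $v \notin \fv(E) \cup \{x\}$, hence $v \notin \fv(x \coloneq E)$. From \textsc{EX-Assign} with $E' \coloneq v$ (its side condition $x \notin \fv(v)$ holds since $x \ne v$) we get $\etrip{x = v}{x \coloneq E}{x = E[x/v]}$. Then \textsc{EX-Frame} with frame $R \equiv P[x/v]$ applies, since $\mod(x \coloneq E) = \{x\}$ and $x \notin \fv(P[x/v])$, yielding $\etrip{x = v \wedge P[x/v]}{x \coloneq E}{x = E[x/v] \wedge P[x/v]}$. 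Finally \textsc{EX-Exists} (with $v \notin \fv(x \coloneq E)$) binds $v$, giving $\etrip{\exists v\,(x = v \wedge P[x/v])}{x \coloneq E}{\exists v\,(x = E[x/v] \wedge P[x/v])}$. The postcondition is literally $\sp(P, x \coloneq E)$, and the precondition $\exists v\,(x = v \wedge P[x/v])$ is equivalent, by the one-point rule and $v \notin \fv(P)$, to $P[x/v][v/x] \equiv P$, which is the desired triple.

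For $C = C_1; C_2$, the induction hypotheses give $\etrip{P}{C_1}{\sp(P, C_1)}$ and $\etrip{\sp(P, C_1)}{C_2}{\sp(\sp(P, C_1), C_2)}$, and one use of \textsc{EX-Seq} yields $\etrip{P}{C_1; C_2}{\sp(\sp(P, C_1), C_2)}$ with $\sp(\sp(P, C_1), C_2) \equiv \sp(P, C_1; C_2)$. For $C = \ifstmt{E}{C_1}{C_2}$, I would apply the induction hypothesis to $C_1$ with precondition $P \wedge E$ and to $C_2$ with precondition $P \wedge \neg E$; \textsc{EX-IfThen} and \textsc{EX-IfElse} then give $\etrip{P \wedge E}{C}{\sp(P \wedge E, C_1)}$ and $\etrip{P \wedge \neg E}{C}{\sp(P \wedge \neg E, C_2)}$, and \textsc{EX-Disj} combines them into $\etrip{(P \wedge E) \vee (P \wedge \neg E)}{C}{\sp(P \wedge E, C_1) \vee \sp(P \wedge \neg E, C_2)}$, whose precondition is equivalent to $P$ (using that $E$ is boolean-typed) and whose postcondition is $\sp(P, \ifstmt{E}{C_1}{C_2})$ by definition. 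The only genuine obstacle is the substitution bookkeeping in the assignment case — choosing $v$ fresh enough and checking the side conditions of \textsc{EX-Assign}, \textsc{EX-Frame}, and \textsc{EX-Exists}; every other case is a direct rule application modulo the equivalence-class convention.
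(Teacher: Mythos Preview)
Your proposal is correct and follows essentially the same route as the paper: induction on $C$, with the assignment case handled by \textsc{EX-Assign} followed by \textsc{EX-Frame} and \textsc{EX-Exists}, and the \texttt{if} case by \textsc{EX-IfThen}/\textsc{EX-IfElse} plus \textsc{EX-Disj}. Your treatment of $\kskip$ is in fact more careful than the paper's, which writes ``by \textsc{EX-Skip} $\etrip{P}{\kskip}{P}$'' without making the \textsc{EX-Frame} step explicit; and your freshness bookkeeping for $v$ in the assignment case spells out side conditions the paper leaves implicit.
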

\begin{proof}
  By induction on $C$:

  \begin{description}
    \item[$\kskip$:]
      By \textsc{EX-Skip} $\etrip{P}{\kskip}{P}$ and $P \equiv \sp(P, \kskip)$ by definition.
    \item[$C \equiv x \coloneq E$:]~
      \begin{description}
        \item[(1)] $P \equiv \exists y (x = y \wedge P')$ for some $P'$ since $\exists y (x = y)$ is a tautology. WLOG assume $x \notin \fv(P')$ (instances of $x$ can be replaced with $y$).
        \item[(2)] $\etrip{x = y}{x \coloneq E}{x = E[x/y]}$ by \textsc{EX-Assign}.
        \item[(3)] $\etrip{x = y \wedge P'}{x \coloneq E}{x = E[x/y] \wedge P'}$ by \textsc{EX-Frame} and (2).
        \item[(4)] $\etrip{\exists y (x = y \wedge P')}{x \coloneq E}{\exists y (x = E[x/y] \wedge P')}$ by \textsc{EX-Exists} and (3).
        \item[(5)]
          \begin{align*}
            \sp(P, x \coloneq E) &\equiv \exists v (x = E[x/v] \wedge P[x/v]) &\text{defn} \\
              &\equiv \exists v (x = E[x/v] \wedge (\exists y (v = y \wedge P'))) &\text{(1)} \\
              &\equiv \exists y (x = E[x/y] \wedge P') &\text{logic}
          \end{align*}
        \item[(6)] $\etrip{P}{x \coloneq E}{\sp(P, x \coloneq E)}$ by (1), (4), and (5).
      \end{description}

    \item[$C_1; C_2$:]~
      \begin{description}
        \item[(1)] $\etrip{P}{C}{\sp(P, C_1)}$ by induction
        \item[(2)] $\etrip{\sp(P, C_1)}{C_2}{\sp(\sp(P, C_1), C_2)}$ by induction
        \item[(3)] $\etrip{P}{C_1; C_2}{\sp(\sp(P, C_1), C_2)}$ by \textsc{EX-Seq} using (1), (2)
        \item[(4)] $\sp(\sp(P, C_1), C_2) \equiv \sp(P, C_1; C_2)$ by definition
        \item[(5)] $\etrip{P}{C_1; C_2}{\sp(P, C_1; C_2)}$ by (3) and (4)
      \end{description}

    \item[$\ifstmt{E}{C_1}{C_2}$:]~
      \begin{description}
        \item[(1)] $\etrip{P \wedge E}{C_1}{\sp(P \wedge E, C_1)}$ by induction
        \item[(2)] $\etrip{P \wedge \neg E}{C_2}{\sp(P \wedge \neg E, C_2)}$ by induction
        \item[(3)] $\etrip{P \wedge E}{\ifstmt{E}{C_1}{C_2}}{\sp(P \wedge E, C_1)}$ by \textsc{EX-IfThen} using (1)
        \item[(4)] $\etrip{P \wedge \neg E}{\ifstmt{E}{C_1}{C_2}}{\sp(P \wedge \neg E, C_2)}$ by \textsc{EX-IfElse} using (2)
        \item[(5)] $\vdash ((P \wedge E) \vee (P \wedge \neg E)) ~\ifstmt{E}{C_1}{C_2} \\ (\sp(P \wedge E, C_1) \vee \sp(P \wedge \neg E, C_2))$ by \textsc{EX-Disj} using (3), (4)
        \item[(6)] $(P \wedge E) \vee (P \wedge \neg E) \equiv P$ by logic
        \item[(7)] $\sp(P \wedge E, C_1) \vee \sp(P \wedge \neg E, C_2) \equiv \\ \sp(P, \ifstmt{E}{C_1}{C_2})$ by definition
        \item[(8)] $\vdash (P) ~\ifstmt{E}{C_1}{C_2} \\ (\sp(P, \ifstmt{E}{C_1}{C_2}))$ by (5), (6), and (7) \qedhere
      \end{description}
  \end{description}
\end{proof}

\begin{lemma}\label{lem:el-sp-equiv}
  If $\etrip{P}{C}{Q}$ then $Q \equiv \sp(P, C)$.
\end{lemma}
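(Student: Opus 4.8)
The plan is to prove this by induction on the derivation of $\etrip{P}{C}{Q}$; it is the ``soundness'' counterpart to \Cref{lem:el-sp-provable}. For each of the eight rules I expect the postcondition appearing in the conclusion to be (up to $\equiv$) literally $\sp$ of the conclusion's precondition, obtained from the induction hypotheses on the premises together with exactly one structural fact about $\sp$ from \Cref{app:sp}.

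The two axioms are direct computations. \textsc{EX-Skip} is immediate, since $\sp(\top, \kskip) \equiv \top$. For \textsc{EX-Assign} I would unfold $\sp(x = E', x \coloneq E) \equiv \exists v\,(x = E[x/v] \wedge (x = E')[x/v])$; since $x \notin \fv(E')$ we have $(x = E')[x/v] \equiv (v = E')$, so the existential is witnessed by $v = E'$ and collapses to $x = E[x/E']$, which is exactly the rule's postcondition.

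For the composite rules each step is the same pattern: apply the IH to rewrite each premise's postcondition as an $\sp$, then invoke one lemma to push $\sp$ through the syntactic construct. \textsc{EX-Seq} uses the definitional identity $\sp(\sp(P, C_1), C_2) \equiv \sp(P, C_1; C_2)$; \textsc{EX-IfThen} and \textsc{EX-IfElse} use \Cref{lem:sp-if}; \textsc{EX-Frame} uses \Cref{lem:sp-frame} (after commuting $P \wedge R$, since that lemma's side condition is phrased as $\mod(C) \cap \fv(P) = \emptyset$ whereas the rule's is $\mod(C) \cap \fv(R) = \emptyset$); \textsc{EX-Exists} uses \Cref{lem:sp-exists}; and \textsc{EX-Disj} uses \Cref{lem:sp-disj}.

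I do not anticipate a real obstacle here --- the work is bookkeeping rather than insight. The only spots needing care are the substitution manipulation in the \textsc{EX-Assign} case (it is the hypothesis $x \notin \fv(E')$ that licenses both the simplification of $(x = E')[x/v]$ and the elimination of the existential), and matching the orientation of the frame side condition with \Cref{lem:sp-frame}. Combined with \Cref{lem:el-sp-provable}, this lemma yields the equivalence $\etrip{P}{C}{Q} \iff \sp(P, C) \equiv Q$, i.e.\ \Cref{thm:el-iff-sp}.
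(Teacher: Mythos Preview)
Your proposal is correct and matches the paper's proof essentially step for step: both proceed by induction on the derivation, and for each rule you invoke exactly the same auxiliary lemma the paper does (\Cref{lem:sp-if} for the conditional rules, \Cref{lem:sp-frame} for \textsc{EX-Frame}, \Cref{lem:sp-exists} for \textsc{EX-Exists}, \Cref{lem:sp-disj} for \textsc{EX-Disj}, and the definitional unfolding for \textsc{EX-Seq}). Your treatment of \textsc{EX-Assign} is also the same substitution computation the paper carries out, and your remark about commuting the conjuncts to match the orientation of \Cref{lem:sp-frame} is a fair observation the paper leaves implicit.
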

\begin{proof}
  By induction on the derivation $\etrip{P}{C}{Q}$:
  \begin{description}
    \item[EX-Skip:]
      By inversion $C \equiv \kskip$ and $Q \equiv P$, and by definition $\sp(P, \kskip) \equiv P$. Thus $\sp(P, C) \equiv \sp(P, \kskip) \\ \equiv P \equiv Q$.

    \item[EX-Assign:]\leavevmode
      \begin{description}
        \item[(1)] $C \equiv x \coloneq E$ for some $x, E$ by inversion
        \item[(2)] $P \equiv x = E'$ for some $E'$ by inversion
        \item[(3)] $Q \equiv x = E[x/E']$ by inversion
        \item[(4)] $x \notin \fv(E')$ by inversion
        \item[(5)]
          \begin{align*}
            &\sp(P, C) \\
              &\quad\equiv \sp(x = E', x \coloneq E)  &\text{(1), (2)} \\
              &\quad\equiv \exists v (x = E[x/v] \wedge (x = E')[x/v]) &\text{defn $\sp$}\\
              &\quad\equiv \exists v (x = E[x/v] \wedge v = E') &\text{(4)} \\
              &\quad\equiv \exists v (x = E[x/E'] \wedge v = E') &\text{logic} \\
              &\quad\equiv x = E[x/E'] \wedge \exists v\, v = E' &\text{(4)} \\
              &\quad\equiv x = E[x/E'] &\text{logic} \\
              &\quad\equiv Q &\text{(3)}
          \end{align*}
      \end{description}

    \item[EX-IfThen:]\leavevmode
      \begin{description}
        \item[(1)] $C \equiv \ifstmt{E}{C_1}{C_2}$ by inversion
        \item[(2)] $P \equiv P' \wedge E$ for some $P'$ by inversion
        \item[(3)] $\etrip{P' \wedge E}{C_1}{Q}$ by inversion
        \item[(4)] $Q \equiv \sp(P' \wedge E, C_1)$ by induction using (3)
        \item[(5)]
          \begin{align*}
            &\sp(P, C) \\
            &\quad \equiv \sp(P' \wedge E, \ifstmt{E}{C_1}{C_2}) &\text{(1), (2)} \\
            &\quad \equiv \sp(P' \wedge E, C_1) &\text{\Cref{lem:sp-if}} \\
            &\quad \equiv Q &\text{(4)}
          \end{align*}
      \end{description}

      \item[EX-IfElse:]\leavevmode
        \begin{description}
          \item[(1)] $C \equiv \ifstmt{E}{C_1}{C_2}$ by inversion
          \item[(2)] $P \equiv P' \wedge \neg E$ for some $P'$ by inversion
          \item[(3)] $\etrip{P' \wedge \neg E}{C_2}{Q}$ by inversion
          \item[(4)] $Q \equiv \sp(P' \wedge \neg E, C_2)$ by induction using (3)
          \item[(5)]
            \begin{align*}
              &\sp(P, C) \\
              &\quad \equiv \sp(P' \wedge \neg E, \ifstmt{E}{C_1}{C_2}) &\text{(1), (2)} \\
              &\quad \equiv \sp(P' \wedge \neg E, C_2) &\text{\Cref{lem:sp-if}} \\
              &\quad \equiv Q &\text{(4)}
            \end{align*}
        \end{description}

      \item[EX-Seq:]\leavevmode
        \begin{description}
          \item[(1)] $C \equiv C_1; C_2$ for some $C_1, C_2$ by inversion
          \item[(2)] $\etrip{P}{C_1}{R}$ for some $R$ by inversion
          \item[(3)] $\etrip{R}{C_2}{Q}$ by inversion
          \item[(4)] $R \equiv \sp(P, C_1)$ by induction using (2)
          \item[(5)] $Q \equiv \sp(R, C_2)$ by induction using (3)
          \item[(6)]
            \begin{align*}
              \sp(P, C) &\equiv \sp(P, C_1; C_2) &\text{(1)} \\
                &\equiv \sp(\sp(P, C_1), C_2) &\text{defn $\sp$} \\
                &\equiv \sp(R, C_2) &\text{(4)} \\
                &\equiv Q &\text{(5)}
            \end{align*}
        \end{description}

      \item[EX-Exists:]\leavevmode
        \begin{description}
          \item[(1)] $P \equiv \exists x\, P'$ for some $P'$ by inversion
          \item[(2)] $Q \equiv \exists x\, Q'$ for some $Q'$ by inversion
          \item[(3)] $\etrip{P'}{C}{Q'}$ by inversion
          \item[(4)] $x \notin \fv(C)$ by inversion
          \item[(5)] $Q' \equiv \sp(P', C)$ by induction using (3)
          \item[(6)]
            \begin{align*}
              \sp(P, C) &\equiv \sp(\exists x\, P', C) &\text{(2)} \\
                &\equiv \exists x\, \sp(P', C) &\text{\Cref{lem:sp-exists}} \\
                &\equiv \exists x\, Q' &\text{(5)} \\
                &\equiv Q &\text{(2)}
            \end{align*}
        \end{description}

      \item[EX-Disj:]\leavevmode
        \begin{description}
          \item[(1)] $P \equiv P_1 \vee P_2$ for some $P_1, P_2$ by inversion
          \item[(2)] $Q \equiv Q_1 \vee Q_2$ for some $Q_1, Q_2$ by inversion
          \item[(3)] $\etrip{P_1}{C}{Q_1}$ by inversion
          \item[(4)] $\etrip{P_2}{C}{Q_2}$ by inversion
          \item[(5)] $Q_1 \equiv \sp(P_1, C)$ by induction using (3)
          \item[(6)] $Q_2 \equiv \sp(P_2, C)$ by induction using (4)
          \item[(7)]
            \begin{align*}
              \sp(P, C) &\equiv \sp(P_1 \vee P_2, C) &\text{(1)} \\
                &\equiv \sp(P_1, C) \vee \sp(P_2, C) &\text{\Cref{lem:sp-disj}} \\
                &\equiv Q_1 \vee Q_2 &\text{(5), (6)} \\
                &\equiv Q &\text{(2)}
            \end{align*}
        \end{description}

      \item[EX-Frame:]\leavevmode
        \begin{description}
          \item[(1)] $P \equiv P' \wedge R$ for some $P', R$ by inversion
          \item[(2)] $Q \equiv Q' \wedge R$ for some $Q'$ by inversion
          \item[(3)] $\mod(C) \cap \fv(R) = \emptyset$ by inversion
          \item[(4)] $\etrip{P'}{C}{Q'}$ by inversion
          \item[(5)] $\sp(P', C) \equiv Q$ by induction using (4)
          \item[(6)] $\sp(P' \wedge R, C) \equiv R \wedge \sp(P', C)$ by \cref{lem:sp-frame} using (3)
          \item[(7)]
            \begin{align*}
              \sp(P, C) &\equiv \sp(P' \wedge R, C) &\text{(1)} \\
                &\equiv R \wedge \sp(P', C) &\text{(6)} \\
                &\equiv R \wedge Q' &\text{(5)} \\
                &\equiv Q &\text{(2)} &\qedhere
            \end{align*}
        \end{description}
  \end{description}
\end{proof}

\begin{theorem}\label{thm:el-iff-sp}
  $\etrip{P}{C}{Q} \iff Q \equiv \sp(P, C)$
\end{theorem}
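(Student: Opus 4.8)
The plan is to obtain the theorem directly from the two preceding lemmas, \Cref{lem:el-sp-provable} and \Cref{lem:el-sp-equiv}, which between them already establish the two directions of the biconditional; no new induction is needed.

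For the forward direction, assume $\etrip{P}{C}{Q}$. Then \Cref{lem:el-sp-equiv} gives $Q \equiv \sp(P, C)$ immediately. For the backward direction, assume $Q \equiv \sp(P, C)$. By \Cref{lem:el-sp-provable} we have $\etrip{P}{C}{\sp(P, C)}$, and since throughout the paper propositions are identified with their equivalence classes (so $Q$ and $\sp(P, C)$ denote the same object occurring in the triple), this is the same as $\etrip{P}{C}{Q}$. Hence the proof is a one-liner citing the two lemmas.

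All the real work is upstream, so the only thing I would be careful about is that those lemmas are stated in the form I need. \Cref{lem:el-sp-provable} is proved by structural induction on $C$: the assignment case is handled by rewriting $P$ as $\exists y\,(x = y \wedge P')$ with $x \notin \fv(P')$ and applying \textsc{EX-Assign}, \textsc{EX-Frame}, and \textsc{EX-Exists}, while the compound cases use \textsc{EX-Seq}, \textsc{EX-IfThen}, \textsc{EX-IfElse}, and \textsc{EX-Disj} together with the definition of $\sp$. \Cref{lem:el-sp-equiv} goes by induction on the derivation, matching each rule against the corresponding $\sp$ equation via the helper lemmas \Cref{lem:sp-if}, \Cref{lem:sp-exists}, \Cref{lem:sp-disj}, and \Cref{lem:sp-frame}.

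The subtlest point — the step I would double-check before declaring the theorem a corollary — is the \textsc{EX-Assign} case of \Cref{lem:el-sp-equiv}, where one must verify $\exists v\,(x = E[x/v] \wedge v = E') \equiv x = E[x/E']$, which relies essentially on the side condition $x \notin \fv(E')$ (so that $E'$ can be pulled out of the scope of $v$ and $x$). Once that and the analogous bookkeeping in the other cases are in hand, \Cref{thm:el-iff-sp} follows by simply combining $\Longrightarrow$ from \Cref{lem:el-sp-equiv} with $\Longleftarrow$ from \Cref{lem:el-sp-provable}.
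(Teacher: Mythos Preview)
Your proposal is correct and matches the paper's own proof exactly: the theorem is obtained as an immediate corollary, with $\Longrightarrow$ by \Cref{lem:el-sp-equiv} and $\Longleftarrow$ by \Cref{lem:el-sp-provable}. Your additional commentary on the upstream lemmas is accurate but not needed for the proof of this theorem itself.
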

\begin{proof}
  $\Longrightarrow$: \Cref{lem:el-sp-equiv}; 
  $\Longleftarrow$: \Cref{lem:el-sp-provable}
\end{proof}

\subsection{Gradual exact logic}\label{app:el-gradual}

Valid triples in gradual exact logic are denoted $\getrip{\grad{P}}{C}{\grad{Q}}$.

\begin{definition}
  The \emph{concretization} $\gamma : \GFormula \to \\ \mathcal{P}(\Formula)$ maps a gradual formula to the set of all formulas it can represent:
  \begin{align*}
    \gamma(P) &\coloneq \{ P \} \\
    \gamma(\ttt{?} \wedge P) &\coloneq \{ P' \in \SatFormula \mid P' \Rightarrow P \}
  \end{align*}
\end{definition}

\begin{definition}
  Deductions in gradual exact logic directly are lifted deductions in exact logic:
  \begin{align*}
    &\getrip{\grad{P}}{C}{\grad{Q}} \iffdef \etrip{P}{C}{Q} \\
    &\hspace{2em} \text{for some $P \in \gamma(\grad{P})$ and $Q \in \gamma(\grad{Q})$}
  \end{align*}
\end{definition}

\begin{theorem}\label{thm:hl-as-gel}
  For $P \in \SatFormula$,
  $$\getrip{P}{C}{\ttt{?} \wedge Q} \iff \htrip{P}{C}{Q}.$$
  That is, except for the vacuous case where $P \equiv \bot$, gradualizing the postcondition exactly characterizes deductions in Hoare logic.
\end{theorem}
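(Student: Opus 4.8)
The plan is to unfold the definition of the \GEL triple via concretization, collapse the existential over postconditions using the strongest-postcondition characterization of EL, discharge the satisfiability side condition with the hypothesis $P \in \SatFormula$, and then conclude with the strongest-postcondition characterization of HL.

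First I would expand $\getrip{P}{C}{\ttt{?} \wedge Q}$ using the definition of \GEL deductions together with $\gamma$. Since $\gamma(P) = \{P\}$ and $\gamma(\ttt{?} \wedge Q) = \{ Q' \in \SatFormula \mid Q' \Rightarrow Q \}$, the gradual triple holds iff there is some $Q' \in \SatFormula$ with $Q' \Rightarrow Q$ and $\etrip{P}{C}{Q'}$. Next I would apply \Cref{thm:el-iff-sp}, which gives $\etrip{P}{C}{Q'} \iff Q' \equiv \sp(P, C)$; since we identify formulas with their equivalence classes, the only candidate for $Q'$ is $\sp(P,C)$ itself. Hence the existential collapses, and $\getrip{P}{C}{\ttt{?} \wedge Q}$ holds iff $\sp(P,C) \in \SatFormula$ and $\sp(P,C) \Rightarrow Q$.

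Now I would observe that the satisfiability conjunct is automatic: since $P \in \SatFormula$, \Cref{lem:sp-sat} gives $\sp(P,C) \in \SatFormula$. So the characterization simplifies to $\getrip{P}{C}{\ttt{?} \wedge Q} \iff \sp(P,C) \Rightarrow Q$, and then \Cref{thm:hoare-iff-sp} yields $\sp(P,C) \Rightarrow Q \iff \htrip{P}{C}{Q}$, completing the equivalence in both directions.

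The only delicate point — and the reason the hypothesis $P \in \SatFormula$ appears — is the satisfiability requirement baked into $\gamma(\ttt{?} \wedge Q)$: the natural witness $\sp(P,C)$ must itself be satisfiable, which is exactly what \Cref{lem:sp-sat} provides from $P \in \SatFormula$. (Indeed, when $P \equiv \bot$ we have $\sp(P,C) \equiv \bot \notin \SatFormula$ by \Cref{lem:sp-false}, so no admissible $Q'$ exists and the gradual triple fails even though $\htrip{\bot}{C}{Q}$ holds vacuously; this is the stated vacuous case.) Everything else is routine unfolding of definitions and appeals to the already-established theorems.
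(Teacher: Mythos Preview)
Your proposal is correct and follows essentially the same approach as the paper: both arguments pivot on \Cref{thm:el-iff-sp} to identify the only possible witness $Q'$ with $\sp(P,C)$, invoke \Cref{lem:sp-sat} to discharge the satisfiability side condition from $P \in \SatFormula$, and finish with \Cref{thm:hoare-iff-sp}. The paper treats the two directions separately using the constituent lemmas, whereas you package the argument as a single chain of equivalences, but the content is the same.
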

\begin{proof}
  $\Longrightarrow$:
    \begin{description}
      \item[(1)] $\getrip{P}{C}{\ttt{?} \wedge Q}$ by assumptino
      \item[(2)] $\etrip{P}{C}{Q'}$ for some $Q' \in \gamma(\ttt{?} \wedge Q)$ by (1)
      \item[(3)] $\sp(P, C) \equiv Q'$ by \cref{lem:el-sp-equiv} using (2)
      \item[(4)] $Q' \Rightarrow Q$ by definition of $\gamma$ and (2)
      \item[(5)] $\sp(P, C) \Rightarrow Q$ by (3) and (4)
      \item[(6)] $\htrip{P}{C}{Q}$ by \cref{lem:sp-imply-hoare} using (5)
    \end{description}

  $\Longleftarrow$:
  \begin{description}
    \item[(1)] $\htrip{P}{C}{Q}$ by assumption
    \item[(2)] $P \in \SatFormula$ by assumption
    \item[(3)] $\sp(P, C) \Rightarrow Q$ by \cref{lem:hoare-imply-sp} using (1)
    \item[(4)] $\sp(P, C) \in \SatFormula$ by \cref{lem:sp-sat} using (2)
    \item[(5)] $\sp(P, C) \in \gamma(\ttt{?} \wedge Q)$ by definition of $\gamma$ using (3) and (4)
    \item[(6)] $\etrip{P}{C}{\sp(P, C)}$ by \cref{lem:el-sp-provable}
    \item[(7)] $\getrip{P}{C}{\ttt{?} \wedge Q}$ by definition using (5) and (6) \qedhere
  \end{description}
\end{proof}

\begin{theorem}\label{thm:il-as-gel}
  If $Q \in \SatFormula$,
  $$\getrip{\ttt{?} \wedge P}{C}{Q} \iff \itrip{P}{C}{Q}$$
  That is, except in the vacuous case where $Q \equiv \bot$, gradualizing the precondition exactly characterizes deductions in incorrectness logic.
\end{theorem}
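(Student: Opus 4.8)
The plan is to follow the template of \cref{thm:hl-as-gel}, substituting the IL-side machinery from \cref{app:il} and \cref{app:fixpoint} for the HL-side machinery. First I would unfold the definition of $\gvdash$: since $\gamma(Q) = \{Q\}$, the triple $\getrip{\ttt{?} \wedge P}{C}{Q}$ holds iff $\etrip{P'}{C}{Q}$ for some $P' \in \gamma(\ttt{?} \wedge P)$, i.e.\ for some satisfiable $P'$ with $P' \Rightarrow P$; and by \cref{thm:el-iff-sp} this $\etrip{P'}{C}{Q}$ is equivalent to $Q \equiv \sp(P', C)$. So the theorem reduces to the claim that ``$Q \equiv \sp(P', C)$ for some satisfiable $P' \Rightarrow P$'' is equivalent to $Q \Rightarrow \sp(P, C)$, the latter being the characterization of $\itrip{P}{C}{Q}$ from \cref{thm:il-iff-sp}.

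For the $\Longrightarrow$ direction, given such a $P'$, \cref{lem:sp-strengthen} applied to $P' \Rightarrow P$ gives $\sp(P', C) \Rightarrow \sp(P, C)$, so $Q \equiv \sp(P', C) \Rightarrow \sp(P, C)$, and \cref{lem:sp-to-il} delivers $\itrip{P}{C}{Q}$. This direction does not use the hypothesis $Q \in \SatFormula$.

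For the $\Longleftarrow$ direction, assume $\itrip{P}{C}{Q}$ and $Q \in \SatFormula$. The witness is $P' \coloneq P \wedge \wp(C, Q)$, exactly the formula from the sketch in \cref{sec:charact-hl-il}. From \cref{lem:il-to-sp} we get $Q \Rightarrow \sp(P, C)$; then \cref{lem:sp-wp-fixpoint-p} gives $Q \equiv \sp(P \wedge \wp(C, Q), C) \equiv \sp(P', C)$, and combining with \cref{lem:el-sp-provable} yields $\etrip{P'}{C}{Q}$. It remains to show $P' \in \gamma(\ttt{?} \wedge P)$: the implication $P' \Rightarrow P$ is immediate, and $P' = P \wedge \wp(C, Q) \in \SatFormula$ is exactly \cref{lem:il-wp-sat} --- this is the one place where the hypothesis $Q \in \SatFormula$ is needed. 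Hence $\etrip{P'}{C}{Q}$ with $P' \in \gamma(\ttt{?} \wedge P)$, so $\getrip{\ttt{?} \wedge P}{C}{Q}$ by definition of $\gvdash$.

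I expect no serious obstacle: the substantive content of the backward direction --- that $P \wedge \wp(C, Q)$ is simultaneously a ``fixpoint'' precondition recovering $Q$ and satisfiable --- has already been isolated into \cref{lem:sp-wp-fixpoint-p} and \cref{lem:il-wp-sat}, so the argument is a short chain of rewrites requiring no further induction on $C$ or on derivations.
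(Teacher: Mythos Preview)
Your proposal is correct and follows essentially the same route as the paper's own proof: both directions unfold the definition of $\gvdash$, use \cref{thm:el-iff-sp} to replace the exact triple by $Q \equiv \sp(P', C)$, and for $\Longleftarrow$ take the witness $P' = P \wedge \wp(C, Q)$, invoking \cref{lem:sp-wp-fixpoint-p} for the fixpoint equation and \cref{lem:il-wp-sat} for satisfiability. Your identification of where the hypothesis $Q \in \SatFormula$ enters (only via \cref{lem:il-wp-sat}) is also exactly right.
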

\begin{proof}
  $\Longrightarrow$:
  \begin{description}
    \item[(1)] $\getrip{\ttt{?} \wedge P}{C}{Q}$ by assumption
    \item[(2)] $\etrip{P'}{C}{Q}$ for some $P' \in \gamma(\ttt{?} \wedge P)$ by definition using (1)
    \item[(3)] $P' \Rightarrow P$ by definition of $\gamma$ using (2)
    \item[(4)] $Q \equiv \sp(P', C)$ by \cref{lem:el-sp-equiv} using (2)
    \item[(5)] $\sp(P', C) \Rightarrow \sp(P, C)$ by \cref{lem:sp-strengthen} using (3)
    \item[(6)] $Q \Rightarrow \sp(P, C)$ by (4) and (5)
    \item[(7)] $\itrip{P}{C}{Q}$ by \cref{thm:hoare-iff-sp} using (6)
  \end{description}

  $\Longleftarrow$:
  \begin{description}
    \item[(1)] $\itrip{P}{C}{Q}$ by assumption
    \item[(2)] $Q \in \SatFormula$ by assumption
    \item[(3)] $P \wedge \wp(C, Q) \in \SatFormula$ by \cref{lem:il-wp-sat} using (1) and (2)
    \item[(4)] $P \wedge \wp(C, Q) \Rightarrow P$ by logic
    \item[(5)] $P \wedge \wp(C, Q) \in \gamma(\ttt{?} \wedge P)$ by definition of $\gamma$ using (3) and (4)
    \item[(6)] $Q \Rightarrow \sp(P, C)$ by \cref{thm:hoare-iff-sp} using (1)
    \item[(7)] $\sp(P \wedge \wp(C, Q), C) \equiv Q$ by \cref{lem:sp-wp-fixpoint-p} using (6)
    \item[(8)] $\etrip{P \wedge \wp(C, Q)}{C}{Q}$ by \cref{thm:el-iff-sp} using (7)
    \item[(9)] $\getrip{\ttt{?} \wedge P}{C}{Q}$ by (5) and (8) \qedhere
  \end{description}
\end{proof}

\end{document}
\endinput